\newcommand{\be}{\begin{equation}}
\newcommand{\ee}{\end{equation}}
\newcommand{\bq}{\begin{eqnarray}}
\newcommand{\eq}{\end{eqnarray}}
\newcommand{\ba}{\begin{align}}
\newcommand{\ea}{\end{align}}
\newcommand{\1}{\mathbbm{1}}
\newcommand{\ket}[1]{ | \, #1 \rangle}
\newcommand{\bra}[1]{ \langle #1 \,  |}
\newcommand{\tr}[1]{{\rm tr}\left[{#1}\right]}
\newcommand{\bZ}{\mathbbm{Z}}
\newcommand{\cE}{\mathcal{E}}
\newcommand{\cL}{\mathcal{L}}
\newcommand{\cM}{\mathcal M}
\newcommand{\cO}{\mathcal O}
\newcommand{\cV}{\mathcal V}
\newtheorem{theorem}{Theorem}
\newtheorem{lemma}[theorem]{Lemma}
\newtheorem{definition}[theorem]{Definition}
\def\qed{\leavevmode\unskip\penalty9999 \hbox{}\nobreak\hfill
     \quad\hbox{\leavevmode  \hbox to.77778em{%
               \hfil\vrule   \vbox to.675em%
               {\hrule width.6em\vfil\hrule}\vrule\hfil}}
     \par\vskip3pt}
    {\hspace*{\fill}$\Box$\vspace{1.5ex}\par}
\newcommand{\Sp}{\,\,\,\,\,\,}
\begin{document}

\title{Self correction requires Energy Barrier for Abelian quantum doubles}

\author{Anna K\'om\'ar}
\affiliation{Institute for Quantum Information and Matter and Walter Burke Institute for Theoretical Physics, California Institute of Technology, Pasadena, California 91125, USA}
\author{Olivier Landon-Cardinal} 
\affiliation{Institute for Quantum Information and Matter and Walter Burke Institute for Theoretical Physics, California Institute of Technology, Pasadena, California 91125, USA}
\affiliation{Department of Physics, McGill University, Montreal, Canada H3A 2T8}
\author{Kristan Temme}
\affiliation{Institute for Quantum Information and Matter and Walter Burke Institute for Theoretical Physics, California Institute of Technology, Pasadena, California 91125, USA}
\affiliation{IBM T.J. Watson Research Center, Yorktown Heights, NY 10598, USA}
\date{\today}

\begin{abstract} 
We rigorously establish an Arrhenius law for the mixing time of quantum doubles based on any Abelian group $\mathbb{Z}_d$. We have made the concept of the energy barrier therein mathematically well-defined, it is related to the minimum energy cost the environment has to provide to the system in order to produce a generalized Pauli error, maximized for any generalized Pauli errors, not only logical operators. We evaluate this generalized energy barrier in Abelian quantum double models and find it to be a constant independent of system size. Thus, we rule out the possibility of entropic protection for this broad group of models.
\end{abstract}

\maketitle
\section{Introduction}
\label{sec:intro}

Whether it is possible to preserve arbitrary quantum information over a long period of time is a question of both fundamental and practical interest. \emph{Active} quantum error correction provides a way to protect quantum information but requires keeping track of and correcting the errors over a short time scale. Alternatively, quantum self-correcting systems would \emph{passively} preserve quantum information in the presence of a thermal environment without the need for external intervention on the system. The dynamics of these quantum "memories" would be such that the probability of an error occurring on the encoded information is exponentially suppressed with system size, resulting in an exponentially long memory time. Candidates for self-correction are typically systems governed by a local Hamiltonian whose degenerate ground space stores quantum information.

Assessing whether a system is self-correcting requires estimating the scaling of its memory time with system size. This difficult problem is often reduced to evaluating the \emph{energy barrier}, loosely defined as the maximal energy of intermediate states in a sequence of local transformations taking a ground state to an orthogonal ground state, minimized over all such possible sequences. This sequence of excited states mimics the evolution of the system under thermalization and decoding. The intuition (and implicit conjecture) is that the system obeys the phenomenological Arrhenius law which relates the memory time $t_\mathrm{mem}$ to the energy barrier  $\Delta E^*$ and the inverse temperature $\beta \equiv1/k_B T$ 
\begin{equation} \label{eq:Arrhenius_phenomenological}
t_\mathrm{mem}\propto e^{\beta \Delta E^*}
\end{equation}

The Arrhenius law is a useful guiding principle. For classical models, one can intuitively understand the exponentially long (classical) memory time of the ferromagnetic 2D Ising model by realizing that its energy barrier is proportional to the linear system size. Indeed, to go from the all up state to the all down state, one needs to flip a macroscopic droplet of spins whose energy scale with its perimeter. For quantum models, the most widely known example of a self-correcting quantum memory is the 4D Kitaev's toric code~\cite{Kitaev03, DKL+02, AHH+10} whose energy barrier is also proportional to the linear system size.  

The scaling energy barrier of a quantum model is intimately related to the geometrical support of operators mapping a ground state to a different orthogonal ground state, called \emph{logical operators}. For the 4D toric code, logical operators are tensor product of single qubit operators acting on a two-dimensional sheet-like subset of qubits, similar to the logical operator of the 2D Ising model which flips all spins. 

While the 4D Kitaev's toric code is self-correcting, it requires addressable long-range interactions if embedded in a lattice of lower dimensionality. Various attempts have been made to decrease the dimensionality of such a self-correcting code, while retaining a large energy barrier of the system \cite{Brell14,PHW+13,HCE+14}. A typical shortcomings of these codes includes sensitivity to perturbations \cite{LYP+15}. Finding a self-correcting system in three dimensions (or lower) is still an open question.

Following the intuition based on the Arrhenius law, it is believed that quantum self-correction requires a \emph{scaling} energy barrier, i.e., an energy barrier that is an increasing function of system size. However, a formal relation between self-correction and a scaling energy barrier has not been established and the Arrhenius law has only be proven for a few models while there are known counterexamples. Moreover, it was recently suggested that there might exist a different kind of protection \cite{LP13}, one that does not require a scaling energy barrier, coined \emph{entropy protection}. The intuition is that while there exist paths in phase space mapping a ground state to an orthogonal ground state while only introducing a constant amount go energy, these paths might not be the typical. Typical paths, however, might require the system to go through a scaling energy barrier. We could think of such a model as having an effective free energy barrier, i.e., there are free energy valleys in the landscape between the two ground states and in order to get out of such a valley the system would have to overcome an effective barrier. 

In 2014, Brown et al. proposed a local 2D Hamiltonian which seemed to realize entropy protection~\cite{BAP14} since its memory time exhibits a super exponential scaling, albeit only in a limited range of temperature. This model consists of a toric code-like structure, where instead of qubits $d$-level spins (qudits) are placed on the edges of a square lattice. This model also corresponds to the quantum double of $\mathbb{Z}_d$. Its elementary excitations are $d$ different electric and $d$ different magnetic anyons. Specifically, in Ref.~\cite{BAP14} $d=5$, and due to charge-flux duality, it is convenient to think only in terms of e.g. electric charges. Then there are $5$ different charges, grouped as: vacuum, light particle, heavy particle, heavy antiparticle, light antiparticle. Particle-antiparticle pairs have the same mass, furthermore the masses are set such that $m_{\textrm{heavy}} > 2 m_{\textrm{light}}$ to ensure that thermal evolution of the system favours the decay of a heavy particle into two light particles. The authors of Ref.~\cite{BAP14} further introduce defect lines to the system by modifying local terms of the Hamiltonian. When a light particle crosses such a line, it becomes a heavy one and vice versa. This construction results in fractal-like splitting of typical anyon-paths, resembling the fractal geometrical support of logical operators in Haah's cubic code \cite{Haah11,BH13}. The authors of Ref.~\cite{BAP14} numerically observed a memory time for this entropic code similar to the cubic code, that is, it grows super-exponentially with the inverse temperature ($t_{\textrm{mem}} \propto \exp(c\beta^2)$). 
A striking difference between the cubic code and Brown's entropic code is, however, that while the former has an energy barrier that grows logarithmically with system size, the energy barrier of the entropic code is a constant, independent of system size. Thus, Brown's entropic code seems to have a better scaling of memory time than the one predicted by the Arrhenius law.  However, it was also remarked that the super-exponential scaling did not remain valid at arbitrarily low temperature, i.e. in the limit of very large $\beta$. Thus, Brown's entropic code argues for the possibility of entropy protection but failed to settle the question whether entropy can protect quantum information and lead to a better scaling of than memory time than the one predicted by Arrhenius law.

Here, we settle this question in the negative by proving that a scaling energy barrier is necessary for self-correction for any quantum double model of an Abelian group, a general framework which contains Brown's entropic code. Thus, entropy cannot protect quantum information in the absence of a scaling energy barrier for those models. Technically, we establish a rigorous version of the Arrhenius law as an upper bound for the mixing time of quantum doubles of Abelian groups. We prove that the mixing time --defined as the longest time an initial state takes to thermalize to the Gibbs state-- and thus the memory time are upper bounded by $\textrm{poly}(N) \exp(2 \beta \overline{\epsilon})$ where $N$ is the size of the system and  $\overline{\epsilon}$ is the \emph{generalized energy barrier}. We rigorously define $\overline{\epsilon}$ by a natural quantity arising from our analysis which straightforwardly extends the intuitive notion of energy barrier. Finally, we evaluate  the generalized energy barrier and show that it is independent of system size or temperature for two-dimensional Abelian quantum double models. As our bound holds for any temperature, this means that Abelian quantum doubles don't allow for entropy protection, i.e., their memory time can at most scale exponentially with inverse temperature. Our results are based on the method presented in Ref.~\cite{Temme14} and are a generalization of the results therein, where the author has derived a similar Arrhenius law bound and energy barrier for any commuting Pauli stabilizer codes in any dimensions.

The paper is organized as follows. In Sec.~\ref{sec:quantum_doubles} and \ref{sec:noise_model} we introduce the framework of our analysis: the construction of Abelian quantum doubles and the noise model used to simulate the thermal environment. In Sec.~\ref{sec:energy_barrier} we present our main result: the upper bound on the mixing time and the formula for the generalized energy barrier, followed by a discussion on the physical interpretation of this result in Sec.~\ref{sec:Disc}. We present the details of the derivation of the bound in Sec.~\ref{sec:derivation}. Finally, we conclude with possible future directions in Sec.~\ref{sec:Concl}.


\section{Framework}

We now introduce the framework in which our result is valid. First, we introduce the systems of interests, i.e. the quantum double of Abelian groups. Second, we model the thermalization of such a system by the Davies map.

\subsection{Abelian quantum doubles}
\label{sec:quantum_doubles}

Abelian quantum doubles are a special case of the quantum double construction introduced by Kitaev \cite{Kitaev03}, where the quantum double is based on the cyclic group $\mathbb{Z}_d$. This was the model investigated in Ref.~\cite{BAP14} with $d=5$, and it is a generalized toric code construction (the toric code is the quantum double of $\mathbb{Z}_2$) acting on $d$-level spins or \emph{qudits}.

\subsubsection{Generalized Pauli operators}

We will choose a basis for the Hilbert space of a qudit to be labelled by orthonormal states $\left\{\ket{\ell}\right\}$ where $\ell\in\mathbb{Z}_d\cong\{0,\dots,d-1\}$.
We introduce the generalized Pauli operators, $X^k$ and $Z^k$, $k\in\mathbb{Z}_d$. They act on a qudit according to:
\begin{eqnarray}\label{eq:GenPauli}
X^k \left| \ell \right\rangle &=& \left| \ell\oplus k \right\rangle, \\
Z^k \left| \ell \right\rangle &=& \omega^{k\ell} \left| \ell \right\rangle ,
\end{eqnarray}
where $\oplus$ is the addition modulo $d$ and $\omega^\ell = \exp (i 2\pi \ell/d)$, $\ell\in\mathbb{Z}_d$ are the $d$th roots of unity. The eigenvalues of the $Z$ generalized Pauli operator but also the $X$ generalized Pauli operator are precisely the $d$th roots of unity. In our convention, the identity is a generalized Pauli operators with $k=0$. One can straightforwardly derive the following useful identities
\begin{equation}
X^\dagger=X^{d-1} \quad Z^\dagger=Z^{d-1} \quad Z^{k'}X^k=\omega^{kk'} X^k Z^{k'}.
\end{equation}

\subsubsection{Hamiltonian}

We now define the Hamiltonian of the quantum double of $\mathbb{Z}_d$ on $2N$ $d$-level spins or qudits located on the edge of a two dimensional square lattice with $N$ vertices. We define a (generalized) Pauli operator to be a $2N$-tensor product of single-qudit (generalized) Pauli operator $X^k$ or $Z^k$,  $k\in\mathbb{Z}_d$. For convenience, we will henceforth omit the (generalized) modifier. We note $\mathcal{P}_M$ the set of Pauli operators acting non-trivially on at most $M\leq 2N$ qudits. The qudits on which a Pauli operator acts non-trivially are its (geometrical) support. 

\begin{figure}
	\centering
		\includegraphics[width=0.22\textwidth]{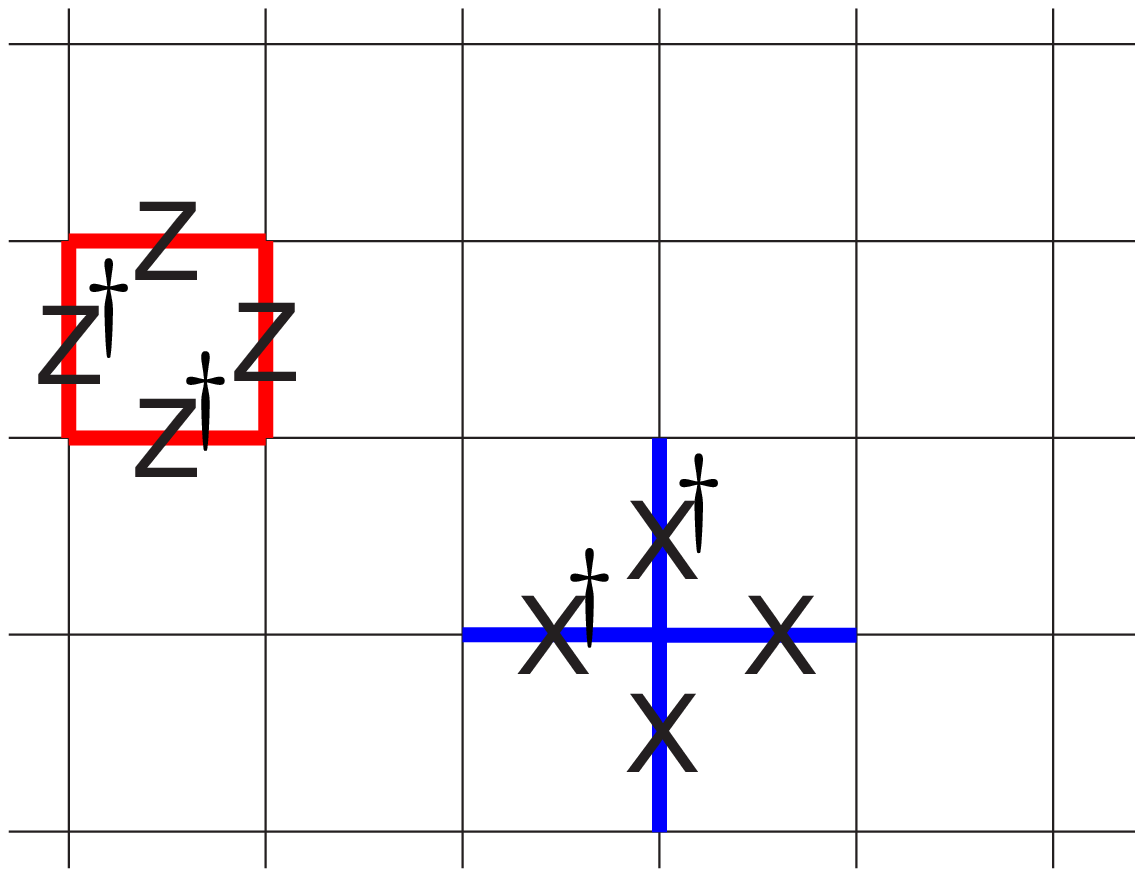}\hspace{0.2cm}
		\includegraphics[width=0.22\textwidth]{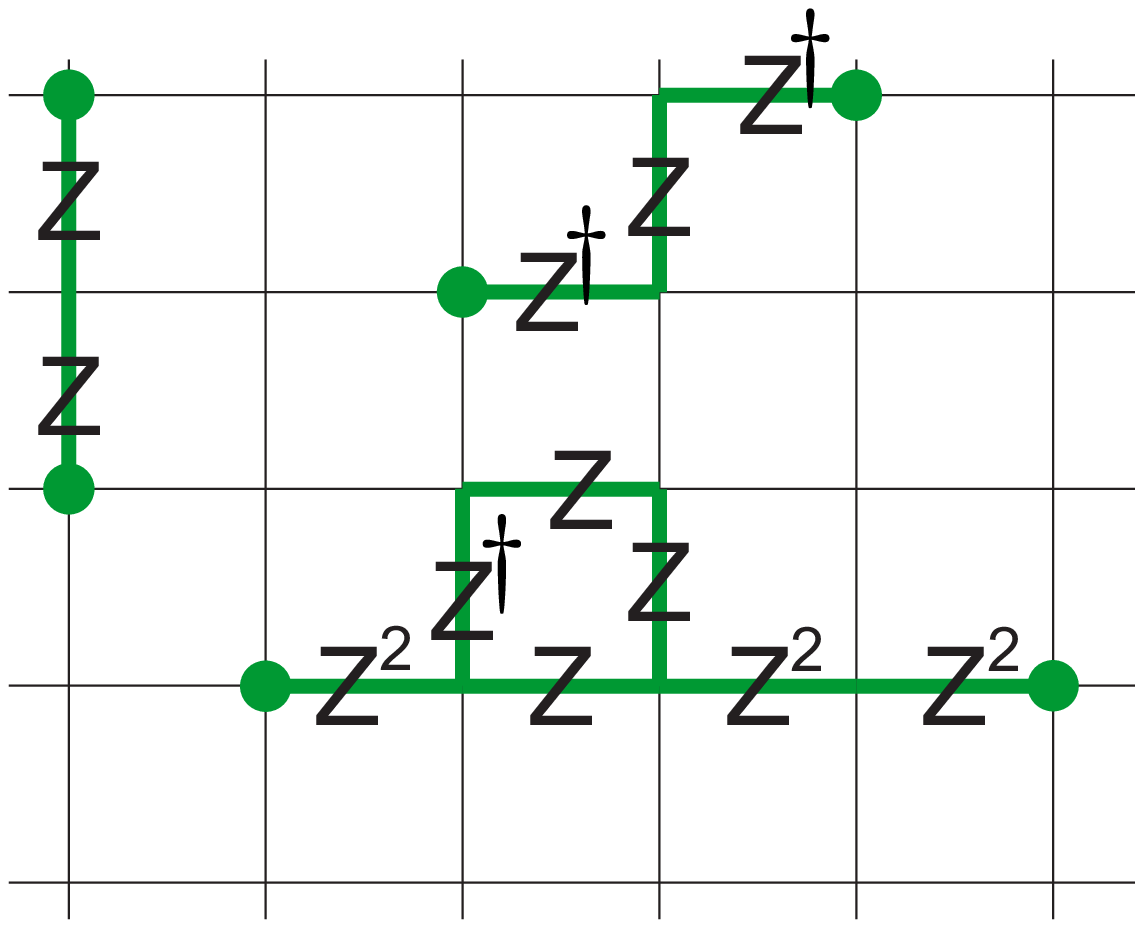}
		\caption{(a) Star operator $A(v)$ in blue and plaquette operator $B(p)$ in red. 
		(b) Examples of anyon paths for an Abelian quantum double ($d>2$).}
	\label{fig:operators}
\end{figure}

The local interactions of the Hamiltonian will be Pauli operators supported on four qudits neighbouring either a vertex $v$ of the lattice for star operators $A(v)$ or a plaquette $p$ for plaquette operators $B(p)$, see Fig.~\ref{fig:operators}(a). A star (and a plaquette) is the union of four edges or, equivalently, qudits located on those edges. It is convenient to label the qudits around a star $+$ or plaquette $\square$ using the cardinal points: East, South, West and North. The star operator $A(v)$ for vertex $v$ is 
\begin{equation}
A(v_i) = X_E \otimes X_S \otimes X^\dagger_W \otimes X^\dagger_N \quad (E,S,W,N)=+_v
\end{equation}
and the plaquette operator $B(p)$ for plaquette $p$ is 
\begin{equation}
B(p) = Z_E \otimes Z^\dagger_S  \otimes Z^\dagger_W \otimes Z_N \quad (E,S,W,N)=\square_p.
\end{equation}
The eigenvalues of star and plaquette operators are the $d$th roots of unity, inherited from the single-qudit Pauli operators. The projector unto the eigenvalue $\omega^a$ of the star operator at vertex $v$ is 
\begin{equation}
P^{a(v)}_v = \frac{1}{d} \sum_{k=0}^{d-1} (\omega^a A(v) )^k.
\end{equation}
Similarly, the projector unto the eigenvalue $\omega^a$ of the plaquette operator at plaquette $p$ is 
\begin{equation}
Q^{b(p)}_p = \frac{1}{d} \sum_{k=0}^{d-1} (\omega^b B(p) )^k.
\end{equation}
Note that those projectors commute since every star operator commute with every plaquette operator. 

The Hamiltonian of the $\mathbb{Z}_d$ quantum double is \cite{Kitaev03, BAP14}
\begin{equation}\label{eq:QDoubleHamiltonian}
H = \sum_v \sum_{a=0}^{d-1} J^{a(v)}_v P^{a(v)}_v + \sum_p \sum_{b=0}^{d-1} J^{b(p)}_p Q^{b(p)}_p ,
\end{equation}
where $J^{a(v)}_v$ and $J^{b(p)}_p$ are non-negative numbers. We set $\forall v,p \: J^0_v=J^0_p=0$ such that a ground state $\ket{\Omega}$ is a common +1 eigenvector of all $P^0_v$ and $Q^0_p$
\begin{equation}
\forall v,p \quad P^0_v \ket{\Omega}=Q^0_p \ket{\Omega}= + \ket{\Omega}.
\end{equation}
The ground space is degenerate whenever this Hamiltonian is defined on a manifold with non-zero genus. For instance, on a square lattice with periodic boundary condition, i.e., a torus, the ground space is $d^2$-degenerate and can be used to encode quantum information.  The positive numbers $J^{a(v)}_v$ and $J^{b(p)}_p$ for non-zero $a$ and $b$ can physically be interpreted as masses of the different excitations of the model, which we now discuss.  

\subsubsection{Excitations and syndromes}

Every spectral projector $P^{a(v)}_v$ and $Q^{b(p)}_p$ are pairwise commuting. Moreover, they commute with the Hamiltonian. Thus, it is convenient to label an energy eigenvector $\ket{\psi}$ using the quantum numbers $\mathbf{a}=\{a_v\}$ and $\mathbf{b}=\{b_p\}$ defined by
\begin{eqnarray}
a_v & = & \bra{\psi}A(v) \ket{\psi} \\
b_p & = & \bra{\psi}B(p) \ket{\psi}.
\end{eqnarray}
Using the terminology of quantum error correction, we define the \emph{syndrome} of $\ket{\psi}$ by
\begin{equation}
e(\ket{\psi})=(\mathbf{a},\mathbf{b}) \in \mathbb{Z}_d^{N+N}.
\end{equation}
Hence, the Hamiltonian can be diagonalized using the different syndrome values, i.e., 
\begin{equation}
H = \sum_{(\mathrm{a},\mathrm{b})} \epsilon (\mathbf{a},\mathbf{b}) \Pi (\mathbf{a},\mathbf{b}) ,
\end{equation}
where the explicit formula for the energies $\epsilon (\mathbf{a},\mathbf{b})$ and projectors $\Pi (\mathbf{a},\mathbf{b})$ can be found in Sec.~\ref{subsec:DiagHam_JumpOps}.   Let us try to draw a physical picture which will help intuition.

The syndrome of an energy eigenvector is a bookkeeping of the different excitations at every vertex and plaquette. The +1 eigenvectors of $P^{a(v)}_v$ for $a(v)\neq 0$ have a point-like excitation located on the vertex $v$ which we call an electric charge (or chargeon) of type $a$. Similarly, the +1 eigenvectors of $Q^{b(p)}_p$ for $b(p)\neq 0$ have a point-like excitation located on the plaquette $p$ which we call a magnetic flux (or fluxon) of type $b$. The ground states of the Hamiltonian have syndrome $(\mathbf{0},\mathbf{0})$. 

\begin{figure}
	\centering
		\includegraphics[width=0.15\textwidth]{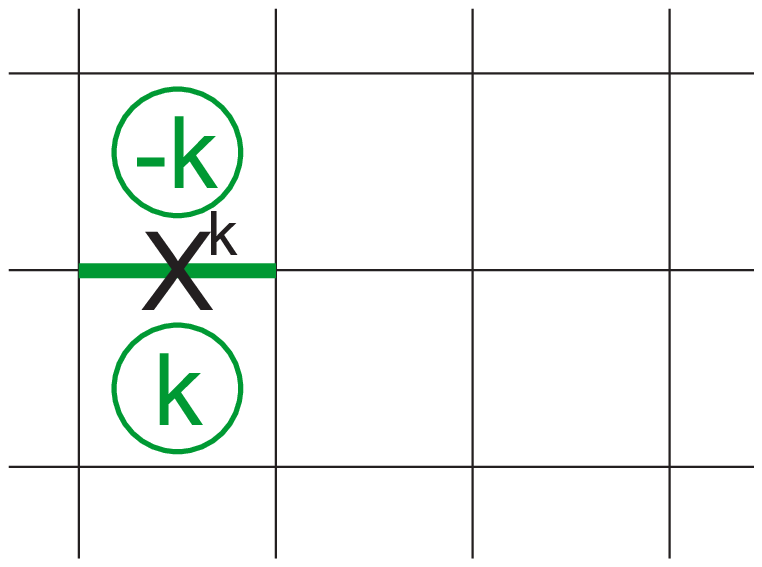}\hspace{0.1cm}
		\includegraphics[width=0.15\textwidth]{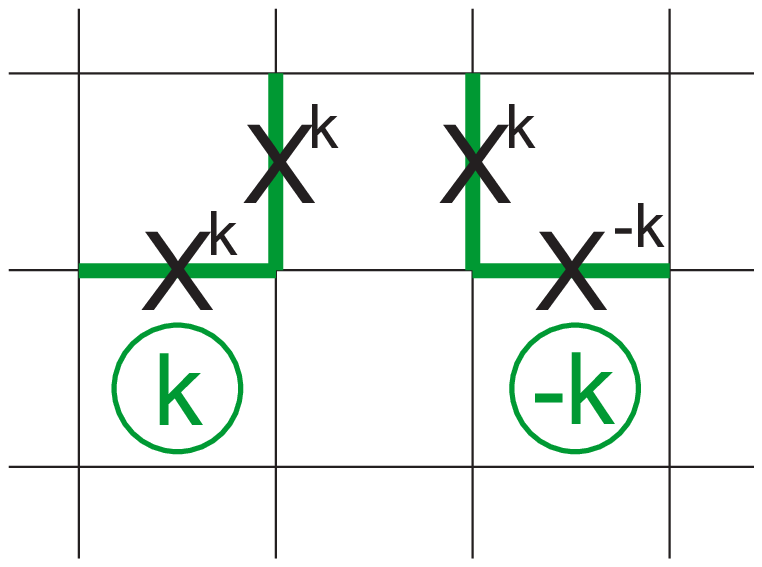}\hspace{0.1cm}
		\includegraphics[width=0.15\textwidth]{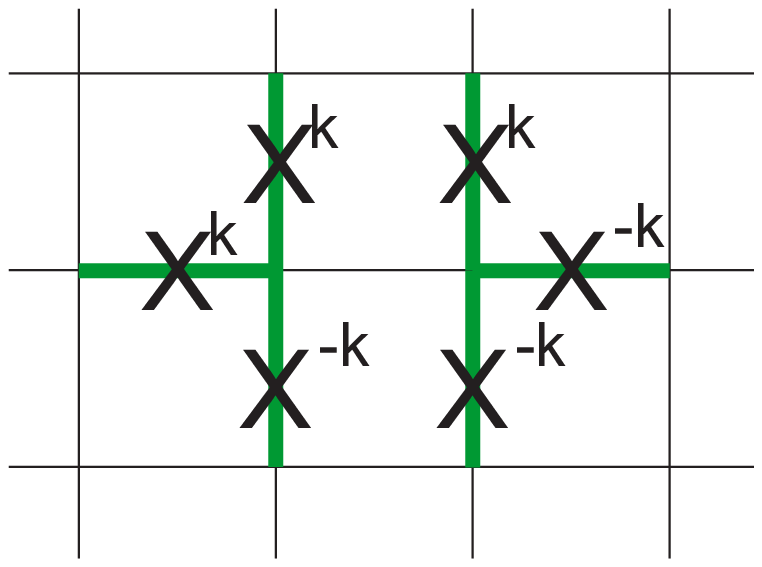}
		\caption{(a) Anyon pair created from vacuum, (b) one of the anyons moved and (c) the pair fused back to vacuum, all the while applying local operators.}
		\label{fig:anyon_creation}
\end{figure}

Physically, the point-like excitations can (i) be created out of the vacuum by applying a local operator on a ground state, (ii) propagate on the lattice and (iii) annihilate back to the vacuum by applying a local operator. This can be understood at the level of the syndrome. Consider on a ground state and then apply a generalized Pauli operator $X^k$ on a qudit located on a horizontal edge (see Fig.~\ref{fig:anyon_creation}).  This will modify the eigenvalues of the plaquette operators North and South of that horizontal edge, denoted $B(p_N)$ and $B(p_S)$. Indeed, the resulting state will be a +1 eigenvector of the spectral projectors $Q^{-k}_{p_N}$ and $Q^{k}_{p_S}$. Physically, $X^k$ created a magnetic flux of type $k$ (resp. $-k$) on the South (resp. North) plaquettes. In other words, $X^k$ created a pair of conjugate magnetic fluxes out of the vacuum. Similarly, $Z^k$ would create a pair of conjugate electric charges out of the vacuum. 
We assign to any generalized Pauli operator $\sigma_{\boldsymbol{\eta}}$ the syndrome $\mathbf{e}(\boldsymbol{\eta})$ of the state $\sigma_{\boldsymbol{\eta}} \ket{\Omega}$
\begin{equation}
\mathbf{e}(\boldsymbol{\eta})=\mathbf{e}(\sigma_{\boldsymbol{\eta}} \ket{\Omega}).
\end{equation}
In our examples, 
\begin{eqnarray}
e(X^k) & = & (\mathbf{a}=\mathbf{0}, \mathbf{b}=\left[0,\dots,0,k,-k,0,\dots,0\right]) \\
e(Z^k) & = & (\mathbf{a}=\left[0,\dots,0,k,-k,0,\dots,0\right] , \mathbf{b}= \mathbf{0}). 
\end{eqnarray}
Given any energy eigenvector $\ket{\psi}$ and any generalized Pauli operator $\boldsymbol{\eta}$, the syndrome of the state $\sigma_{\boldsymbol{\eta}} \ket{\psi}$ is obtained by 
\begin{equation}
\mathbf{e}(\sigma_{\boldsymbol{\eta}} \ket{\psi}) = \mathbf{e}(\boldsymbol{\eta}) \oplus \mathbf{e}(\ket{\psi}).
\end{equation}
This very simple addition rule stems for the Abelian structure of the group $\mathbb{Z}_d$ and is  related to the fusion rules of the excitations of this Abelian topological model.

Fluxons and chargeons turn out to be (Abelian) anyons, i.e., quasi-particles which are not bosonic nor fermionic. Yet their anyonic nature will not be essential in our work. However, we will from now on use the term anyon to designate a generic point-like excitation (either a fluxon or a chargeon).  Moreover, chargeons and fluxons are related by an exact duality which maps the lattice to the dual lattice. Thus, it will often be convenient to focus on a single anyon type, e.g., chargeons in order to simplify our discussion and notations. Also, we would like to introduce a single index $s$ which labels either the vertices or the plaquettes, i.e., $s=v/p$. Thus, any anyon (chargeon or fluxon) is located on a site (vertex or plaquette).


\subsection{Thermal noise model}
\label{sec:noise_model}

The model used in our work to simulate the thermalization process of the quantum double is the Davies map \cite{Davies76, Davies79}, the gold standard for simulating the thermalization of many body systems \cite{BLP+14, AHH+10, BH13, BAP14}. The system is coupled to a bosonic bath and the  Hamiltonian of \{system+bath\} reads
\begin{equation}
H_{\textrm{full}} = H_{\textrm{system}} + \chi \sum_{\alpha}{ S_{\alpha} \otimes B_{\alpha}} + H_{\textrm{bath}},
\end{equation}
where $B_\alpha$ is the operator acting on the bath and $S_\alpha\equiv S^j_{\alpha'}$ is an operator acting on spin $j$ of the system. We consider the weak coupling limit, with $\chi\ll 1$.

The density operator of the system, noted $\rho$, evolves according to the master equation
\begin{equation}
\frac{d \rho}{dt} = - i [H_{\textrm{eff}}, \rho] + \mathcal{L}(\rho) ,
\end{equation}
where $H_{\textrm{eff}}$ is the (Lamb-shifted) system Hamiltonian 
$H_{\textrm{eff}} = H_{\textrm{system}} + \sum_{\alpha, \omega} S^\dagger_\alpha (\omega) S_{\alpha} (\omega)$
and the Liouvillian is
\begin{equation}\label{eq:DaviesGen}
\mathcal{L}(\rho) = \sum_{\alpha, \omega}  \gamma_\alpha (\omega) \left( S_{\alpha} (\omega) \rho S^\dagger_\alpha (\omega) - \frac{1}{2} \{ S^\dagger_\alpha (\omega) S_{\alpha} (\omega) , \rho \}_+ \right).
\end{equation}
The operators governing the evolution of the system in energy space are the spectral jump operators, $S_{\alpha} (\omega)$. They take the system from energy eigenstate $\epsilon'$ to another eigenstate with energy $\epsilon = \epsilon' + \omega$, and have the form
\begin{equation} \label{eq:jump_operator_omega}
S_{\alpha} (\omega) = \sum_{\epsilon(\mathbf{a},\mathbf{b}) - \epsilon(\mathbf{a'},\mathbf{b'}) = \omega} \Pi(\mathbf{a},\mathbf{b}) S_\alpha \Pi(\mathbf{a'},\mathbf{b'}) .
\end{equation} 
They are the Fourier transforms of $S_{\alpha} (t)$ (the time-dependent operator acting on the system due to its contact with the thermal bath):
\begin{equation} \label{eq:jump_operator_def}
S_{\alpha} (t) = \sum_{(\mathbf{a},\mathbf{b}),(\mathbf{a'},\mathbf{b'})} e^{i \epsilon(\mathbf{a},\mathbf{b}) t} \Pi(\mathbf{a},\mathbf{b}) S_\alpha \Pi(\mathbf{a'},\mathbf{b'}) e^{- i \epsilon(\mathbf{a'},\mathbf{b'}) t}.
\end{equation}
The rate with which a state of the system is taken to another state $\omega$ far in energy, by applying the jump operator $S_\alpha (\omega)$ due to its coupling to the thermal bath is the transition rate $\gamma_\alpha(\omega)$. These transition rates obey detailed balance
\begin{equation}
\gamma_\alpha(\omega)=e^{\beta \omega} \gamma_\alpha(-\omega).
\end{equation}

This Liouvillian drives any state towards the Gibbs state
\begin{equation} \label{eq:Gibbs_state}
\rho_G\propto e^{-\beta H_\textrm{system}}
\end{equation}
which is its unique fixed point: $\mathcal{L}(\rho_G) = 0$.

Applying the Davies map to a $\mathbb{Z}_d$ quantum double we need to choose an operator basis for the jump operators $S_\alpha$. For $d=2$, a possible choice is the Pauli group, while for $d>2$ it is the generalized Pauli group. We should be careful, since although the elements of the Pauli group are Hermitian, the elements of the generalized group are not: $X^\dagger = X^{d-1}$. We can circumvent this problem by either writing the interaction terms in the full Hamiltonian as $\sigma_{j,\alpha'} \otimes B_{\alpha}^\dagger + \sigma_{j,\alpha'}^\dagger \otimes B_{\alpha}$ with $\sigma_{j,\alpha'=(l,m)}=Z_j^l X_j^m$, thus $S_{j,\alpha'} = Z_j^l X_j^m$ as in the $\mathbb{Z}_2$ case, or by constructing Hermitian jump operators: $S_{j,\alpha'} = 1/\sqrt{2} (\sigma_{j,\alpha'}+\sigma_{j,\alpha'}^\dagger)$. Independent of which choice we make, our results in the following sections are the same.


\section{Generalized energy barrier}
\label{sec:energy_barrier}

We establish a formerly ill-defined link between the energy barrier of a system and its mixing time for Abelian quantum doubles. We prove a rigorous Arrhenius law upper bound for the mixing time (Sec.~\ref{subsec:Arrh_law_bound}, details of the proof in section~\ref{sec:derivation}), and give a proper definition for the energy barrier appearing in that bound (Sec.~\ref{subsec:en_barrier_def}). In section~\ref{subsec:en_barrier_construction} we evaluate this energy barrier for Abelian quantum doubles in two dimensions and find it is a constant independent of system size or temperature.

\subsection{Definition of the generalized energy barrier}\label{subsec:en_barrier_def}

Recall from the introduction that the energy barrier is intuitively related to the decomposition of operators acting non-trivially within the ground space (logical operators) into a sequence of local operators. Surprisingly, the \emph{generalized} energy barrier arising from our analysis is related to the energy cost of building an \emph{arbitrary} Pauli operator. This seems to go against intuition since an arbitrary Pauli operator $\sigma_{\boldsymbol{\eta}}\in\mathcal{P}_{2N}$ can create an extensive amount of energy. However, excitations which appear in the final error configuration $\mathbf{e}(\boldsymbol{\eta})$ created by the Pauli operator will not contribute towards the generalized energy barrier: only intermediate excitations created in the sequential construction of this final error configuration do. Note that if $\boldsymbol{\eta}$ is a logical operator, the generalized energy barrier coincides with the intuitive energy barrier.

The idea is thus to consider sequences of Pauli operators $\left\{ \sigma_{\overline{\boldsymbol{\eta}}^t} \right\}$ which sequentially build the operator $\sigma_{\boldsymbol{\eta}}$ by applying Pauli operators acting on a single qudit. We call such a sequence a \emph{local errors path}. Indeed, we think of $\boldsymbol{\eta}$ as the index of the final error which we sequentially build through single qudit errors such that the error at step $t$ is indexed by $\overline{\boldsymbol{\eta}}^{t}$. 

\begin{definition}[Local errors path]\label{def:local_err_path}
A local errors path $\{\sigma_{\overline{\boldsymbol{\eta}}^t}\}_{t\geq 0}$ is a sequence of Pauli operators such that
\begin{align}
\sigma_{\overline{\boldsymbol{\eta}}^{t=0}}=\mathbb{I} \\
\textrm{locality} \qquad \forall t \: \exists P\in\mathcal{P}_1 \quad \sigma_{\overline{\boldsymbol{\eta}}^{t+1}}=P \cdot \sigma_{\overline{\boldsymbol{\eta}}^{t}} \\
\textrm{convergence} \qquad \exists \: \sigma_{\boldsymbol{\eta}},T  \quad t>T \Rightarrow \sigma_{\overline{\boldsymbol{\eta}}^{t}}=\sigma_{\boldsymbol{\eta}} 
\end{align}
\end{definition}

At any intermediate step $t\leq T$, the Pauli operator $\sigma_{\overline{\boldsymbol{\eta}}^{t}}$ will create a syndrome $\mathbf{e}(\overline{\boldsymbol{\eta}}^{t})$ corresponding to a pattern of anyons. At every site, only the energy of an anyon whose charge is different from the one in the syndrome $\mathbf{e}(\boldsymbol{\eta})$ contribute towards the energy barrier. Formally, we define the additional energy of the error indexed by $\overline{\boldsymbol{\eta}}^{t}$ with respect to the error indexed by $\boldsymbol{\eta}$ as

\begin{definition}[Additional energy] \label{def:additional_energy}
Let $\overline{\boldsymbol{\eta}}^{t}$ and $\boldsymbol{\eta}$ be indices of two Pauli operators. The additional energy of the error  $\sigma_{\overline{\boldsymbol{\eta}}^{t}}$ with respect to the reference operator $\sigma_{\boldsymbol{\eta}}$ is

\begin{equation} \label{eq:additional_energy}
\overline{\epsilon} (\overline{\boldsymbol{\eta}}^{t}|\boldsymbol{\eta})=\sum_s J_s^{e_s(\overline{\boldsymbol{\eta}}^{t})} \left(1-\delta_{\mathbf{e}_s(\overline{\boldsymbol{\eta}}^t),0}\right) \left(1-\delta_{e_s(\overline{\boldsymbol{\eta}}^t),e_s(\boldsymbol{\eta})}\right)
\end{equation}

\end{definition}

Note that in Eq.~\eqref{eq:additional_energy}, summands do not contribute if $e_s(\overline{\boldsymbol{\eta}}^{t})=0$, i.e., if the intermediate error does not create excitations on site $s$ but also if $e_s(\overline{\boldsymbol{\eta}}^t)=e_s(\boldsymbol{\eta})$, i.e., if the intermediate error creates the same excitation on site $s$ as the reference error $\sigma_{\boldsymbol{\eta}}$.

We are now in position to define the generalized energy barrier of an error $\sigma_{\boldsymbol{\eta}}$ and then of the Hamiltonian.

\begin{definition}[Generalized energy barrier] 
Let $\sigma_{\boldsymbol{\eta}}\in\mathcal{P}_{2N}$ be a Pauli operator and $\{\sigma_{\overline{\boldsymbol{\eta}}^t}\}$ denote an arbitrary local errors path converging to $\sigma_{\boldsymbol{\eta}}$. The generalized energy barrier of $\sigma_{\boldsymbol{\eta}}$ is
\begin{equation}\label{eq:energy_barrier_operator}
\overline{\epsilon} (\boldsymbol{\eta}) = \min_{\{\sigma_{\overline{\boldsymbol{\eta}}^t}\}\to \sigma_{\boldsymbol{\eta}}} \max_t \overline{\epsilon} (\overline{\boldsymbol{\eta}}^{t}|\boldsymbol{\eta})
\end{equation}

The generalized energy barrier of the Hamiltonian H is
\begin{equation}\label{eq:energybarrier}
\overline{\epsilon} (H) = \max_\eta \overline{\epsilon} (\boldsymbol{\eta}) .
\end{equation}

\end{definition}

We now introduce the mixing time, an upper bound on the quantum memory time, and then introduce our bound which relates it to the generalized energy barrier through a formula similar to the Arrhenius law given in Eq.~\eqref{eq:Arrhenius_phenomenological}.

\subsection{Arrhenius upper bound on the mixing time}\label{subsec:Arrh_law_bound}
 
We define the mixing time as the time scale after which the evolution of any initial state of the system becomes $\varepsilon = e^{-1/2}$-indistinguishable from the Gibbs state defined by Eq.~\eqref{eq:Gibbs_state}. The $\varepsilon = e^{-1/2}$ value is chosen so the relationship between the mixing time and the gap of the Liouvillian will have a convenient form, and the exact value won't modify either the qualitative aspect of our calculations or the scaling of the bound obtained on the mixing time.

\begin{definition}[($\varepsilon$)-mixing time]
The ($\varepsilon$)-mixing time of a Liouvillian (whose fixed point is the Gibbs state $\rho_G$) is 
\begin{equation}
t_{\textrm{mix}} (\varepsilon) = \min \{ t \ | \ t'>t \Rightarrow  || e^{\mathcal{L}t'} \rho_0 - \rho_G ||_1 < \varepsilon \:  \forall \rho_0 \} .
\end{equation}
\end{definition}
where we used the trace norm, $||A||_1=\textrm{Tr}\left[ \sqrt{A^\dagger A} \right]$, to measure the (in)distinguishability of two quantum states.

\begin{definition}[Mixing time]
The mixing time of a Liouvillian is its $\varepsilon = e^{-1/2}$ ($\varepsilon$)-mixing time.
\end{definition}

Loosely defining the quantum memory time as the maximal time after which one can recover information about the initial ground state, we immediately see it is upper bounded by the mixing time. Indeed, the Gibbs state treats all ground state on the same footing and thus information about the initial ground state has disappeared. We do not provide a formal definition of the quantum memory time in this work.

Our main result relates the generalized energy barrier to the mixing time through a relation similar to the Arrhenius law.

\begin{theorem}[Arrhenius bound on mixing time]
For any Abelian group $\mathbb{Z}_d$, for any inverse temperature $\beta$, the mixing time of the Davies map Liouvillian of the quantum double of $\mathbb{Z}_d$ is upper bounded by 

\begin{equation} \label{eq:mixing_time_bound}
t_{\textrm{mix}} \leq \mathcal{O} \left( \beta N \mu(N) e^{\beta (2 \bar{\epsilon} + \Delta)} \right) ,
\end{equation}
where $2N$ is the number of qudits in the system, $\Delta$ is the gap of the system Hamiltonian, $\bar{\epsilon}$ is the generalized energy barrier and $\mu(N)$ defined by Eq.~\eqref{eq:def_mu} is the length of the longest optimal local errors path. 
\end{theorem}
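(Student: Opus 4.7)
The plan is to reduce the mixing-time bound to a spectral gap bound for the Davies generator, and then lower-bound that gap by a canonical-paths argument whose congestion is controlled by the generalized energy barrier $\bar{\epsilon}$. The factor $e^{\beta N}$-type overhead from the worst-case initial state will contribute the $\beta N$ prefactor, while the $e^{2\beta\bar{\epsilon}+\beta\Delta}$ factor will come from the worst Gibbs-weight ratio along an optimal local errors path.

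First, I would invoke the standard bound $t_\mathrm{mix}(e^{-1/2}) \leq O\!\left(\log(1/\pi_\mathrm{min})/\lambda\right)$ valid for primitive reversible Liouvillians, where $\lambda$ is the spectral gap of the Davies generator $\mathcal{L}$ (self-adjoint in the KMS inner product) and $\pi_\mathrm{min}$ is the smallest Gibbs weight. Since $H$ is $k$-local with $O(N)$ terms of bounded norm, $\log(1/\pi_\mathrm{min}) \leq \beta \|H\| \leq O(\beta N)$. It therefore suffices to prove $\lambda^{-1} \leq O\!\left(\mu(N)\,e^{\beta(2\bar{\epsilon}+\Delta)}\right)$. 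Because all terms of $H$ commute and the single-qudit jump operators $S_\alpha$ permute the common eigenbasis of the projectors $P^{a(v)}_v, Q^{b(p)}_p$, the Davies map restricted to the diagonal block is a classical Markov chain on the space of syndromes $(\mathbf{a},\mathbf{b})$ with Gibbs stationary distribution $\pi(\mathbf{a},\mathbf{b}) \propto e^{-\beta \epsilon(\mathbf{a},\mathbf{b})}$; the off-diagonal block decays at least as fast (a standard decoupling argument). Thus bounding $\lambda$ reduces to a Poincar\'e inequality for a reversible classical chain whose edges correspond to single-qudit Pauli applications, with rates $\gamma_\alpha(\omega)$ obeying detailed balance.

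Second, I would use the canonical paths (or flow) method. For every pair of configurations $(x,y)$ in syndrome space, I pick the local errors path $\{\sigma_{\overline{\boldsymbol{\eta}}^t}\}$ connecting them that achieves the minimax in Eq.~\eqref{eq:energy_barrier_operator}, and define $\boldsymbol{\eta}$ to be the net Pauli $\sigma_y \sigma_x^\dagger$. The resulting bound has the form
\begin{equation*}
\lambda^{-1} \;\leq\; \mu(N)\,\max_{\text{edge } e=(u\to u')}\; \frac{1}{\pi(u)\gamma(\omega_{u\to u'})}\sum_{(x,y): e\in\text{path}}\pi(x)\pi(y).
\end{equation*}
The ratio $\pi(x)\pi(y)/\pi(u)$ for $u$ an intermediate configuration along an optimal path from $x$ to $y$ is by construction at most $e^{\beta \bar{\epsilon}(\overline{\boldsymbol{\eta}}^t|\boldsymbol{\eta})}\cdot \pi(y)$ times a similar factor for $x$, since the syndrome at $u$ differs from the syndromes at $x$ and $y$ only through anyons that are \emph{not} present in the final error $\boldsymbol{\eta}$. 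This is precisely what Definition~\ref{def:additional_energy} was engineered to capture, and it is the point where the factor of $2$ in $2\beta\bar{\epsilon}$ arises: each endpoint of the edge contributes one copy of the $e^{\beta\bar{\epsilon}}$ climb. The single-site jump rate $\gamma(\omega)$ for $|\omega|\leq\Delta$ contributes the $e^{\beta\Delta}$ factor via detailed balance $\gamma(\omega)\geq e^{-\beta|\omega|}\gamma(-\omega)\geq e^{-\beta\Delta}/\mathrm{poly}$.

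The hard part will be the decomposition that isolates the additional energy $\bar{\epsilon}(\overline{\boldsymbol{\eta}}^t|\boldsymbol{\eta})$ from the full intermediate energy $\epsilon(\mathbf{e}(\overline{\boldsymbol{\eta}}^t))$, i.e.\ subtracting off contributions from anyons that will be present in the terminal syndrome $\mathbf{e}(\boldsymbol{\eta})$ anyway. Concretely, I would split the path-summation sites into those whose terminal charge coincides with the intermediate charge (which cancel in the Gibbs ratio between the endpoints and the intermediate) and those whose charges differ (which contribute to the barrier); this pairing is made possible precisely by the Abelian fusion rule $\mathbf{e}(\sigma_{\boldsymbol{\eta}}\ket{\psi})=\mathbf{e}(\boldsymbol{\eta})\oplus\mathbf{e}(\ket{\psi})$ that was emphasized in Section~\ref{sec:quantum_doubles}. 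Once this factorization is in place, maximizing over endpoints and edges gives exactly the $\max_{\boldsymbol{\eta}}\bar{\epsilon}(\boldsymbol{\eta}) = \bar{\epsilon}(H)$ of Eq.~\eqref{eq:energybarrier}, and collecting the polynomial overhead from $\mu(N)$, the $O(N)$ edges, and $\log(1/\pi_\mathrm{min})$ yields the claimed bound Eq.~\eqref{eq:mixing_time_bound}.
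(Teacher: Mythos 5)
Your overall skeleton matches the paper's: you pass from mixing time to the spectral gap via the $\chi^2$/trace-norm bound (the paper uses $\|\rho_t-\rho_G\|_1\leq\sqrt{\|\rho_G^{-1}\|}e^{-\lambda t}$, which gives the same $O(\beta N)$ prefactor as your $\log(1/\pi_{\min})$), you lower-bound the gap by a Poincar\'e inequality proved with canonical paths, the factor $2\beta\bar{\epsilon}$ arises from the two Gibbs-weight ratios at the endpoints of a congested edge, $e^{\beta\Delta}$ comes from the worst-case transition rate $\gamma^*$ via detailed balance, and the additional-energy bookkeeping uses the Abelian fusion rule exactly as in Definition~\ref{def:additional_energy}. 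The resulting congestion bound you write down has the same form as Eq.~\eqref{canBound}.

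However, there is a genuine gap at the step where you reduce the problem to "a classical Markov chain on the space of syndromes," dismissing the off-diagonal part with "a standard decoupling argument." This reduction is false in the form stated, and it is precisely the point where the real technical work of Sec.~\ref{sec:derivation} lives. The Davies generator acts on the full matrix algebra $\mathcal{M}_{d^N}$, and its spectral gap is governed by the slowest-decaying \emph{observable}; for a topological memory the dangerous slow modes are the logical (and more generally arbitrary nontrivial Pauli) coherences, which have no counterpart in the syndrome-population dynamics. If the off-diagonal blocks really decayed at least as fast as the diagonal one, the memory time would trivially equal the classical relaxation time of the anyon gas and most of the paper would be unnecessary. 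What the paper actually does is construct the Dirichlet matrix $\hat{\mathcal{E}}$ and variance matrix $\hat{\mathcal{V}}$ in the generalized Pauli basis, pass to a dual basis in which both are block-diagonal with blocks indexed by a \emph{reference Pauli} $(\mathbf{k}_0,\mathbf{p}_0)$ (not by syndromes), lower-bound $\hat{\mathcal{E}}_{(\mathbf{kp})_0}$ by a sum of rank-one projectors, and then run a comparison/support-theory argument with \emph{quantum} canonical paths whose vertices are pairs $\{(\mathbf{a},\mathbf{b})^{-\boldsymbol{\eta}^t},\boldsymbol{\eta}^t\}$ of a syndrome together with a partially constructed Pauli. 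Relatedly, your congestion sum over pairs of syndrome configurations $(x,y)$ is not well-posed: many distinct Paulis $\boldsymbol{\eta}$ connect the same pair of syndromes (in particular all logical operators connect $(\mathbf{0},\mathbf{0})$ to itself), and the variance sums over all of them; the paths must be indexed by $(\text{syndrome},\boldsymbol{\eta})$ and the injective encoding $\Phi_\xi(\hat{\eta}_{(\mathbf{a},\mathbf{b})})=\boldsymbol{\eta}\ominus\boldsymbol{\xi}$ maps into the set of Paulis, which is what makes the congestion sum collapse to $1$ via the trace identity. Your proposal would need to be repaired by replacing the classical-chain reduction with this matrix-algebra canonical-paths construction before the rest of your (otherwise correct) accounting of $\bar{\epsilon}$, $\Delta$, and $\mu(N)$ goes through.
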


The derivation of this result can be found in section \ref{sec:derivation}. We will now show that for Abelian quantum double, $\bar{\epsilon}$ is bounded by a constant independent of system size in Sec.~\ref{subsec:en_barrier_construction} and that $\mu(N)$ is bounded by $8N(d-1)$ in Sec.~\ref{sec:canPathLength}. The right hand side of Eq.~\eqref{eq:mixing_time_bound} has a dependence on a low power of $N$, that does not qualitatively modify the scaling of the mixing time nor the behaviour of the system when considered as a candidate for a quantum memory. The important physical quality of this bound is the Arrhenius law scaling. This scaling is set by the gap of the system Hamiltonian but, more interestingly, by the generalized energy barrier, which we now evaluate.

\subsection{The generalized energy barrier is a constant for Abelian quantum doubles}
\label{subsec:en_barrier_construction}

We will now evaluate the generalized energy barrier of any 2D quantum double of an Abelian group and show that it is a constant, independent of system size, more precisely $2 J_{\textrm{max}}$. While this was known for the $\mathbb{Z}_2$ case~\cite{Temme14}, we extend it to any $\mathbb{Z}_d$ quantum double. From now on, we consider a $\mathbb{Z}_d$ quantum double, with arbitrary $d$. Furthermore, we henceforth omit the 'generalized' modifier in  (generalized) energy barrier for simplicity.

To evaluate the energy barrier of the Hamiltonian, given by Eq.~\eqref{eq:energybarrier}, we want to bound the barrier of an arbitrary Pauli operator, given by Eq.~\eqref{eq:energy_barrier_operator}. Thus, we aim to exhibit a local errors path where the additional energy of any intermediate error is a constant. To do so, we will use the following strategy. We will first turn the final error syndrome into a weighted directed graph intuitively corresponding to the worldlines of anyons. Then, we will decompose that graph into cycles and trees. Cycles correspond to pair of conjugate anyons appearing out of the vacuum, propagating and then fusing back to the vacuum. Trees represent propagation of anyons, whose position (resp. word lines) correspond to terminal vertex (resp. edges) of the tree. Finally, using different techniques for cycles and trees, we show how to build the error of each type by moving at most one anyon at a time in a way that the additional energy of any intermediate error involve at most two local terms of the Hamiltonian, resulting in an energy barrier of at most $2J_\textrm{max}$. 

\subsubsection{Graph corresponding to an error configuration}

\begin{itemize}

\item \textit{Any error is the product of elementary errors whose supports are disjoint. The energy barrier of the error is the largest energy barrier of its elementary errors.}

Let's consider an arbitrary error, i.e., a Pauli operator. Its geometrical support, i.e., qudits on which it acts non-trivially, splits into connected components. We can decompose the global operator into a product of elementary operators, each of which is supported on one connected component. No terms of the Hamiltonian has support which intersect two connected components. Thus, we can choose the local errors path so that elementary errors are built sequentially. In any intermediate error, there is a unique elementary error under construction. The other elementary errors are either not constructed yet, or already constructed. In either case, they do not contribute towards the energy barrier.

\item \textit{Any elementary error can be interpreted as anyons decaying and fusing together, thus forming a fully connected, directed graph with weighted edges. Due to charge conservation, this graph is a flow.}

The error is a tensor product of single-qudit Pauli operators which create pair of conjugate anyons out of the vacuum, fuse and move anyons. Thus, to every elementary error, we can associate a graph with weighted edges, often referred to as string-nets in the literature. Such a graph is depicted on Fig.~\ref{fig:lemmas}. 

A terminal vertex, i.e., a vertex of valency 1, is an anyon. It is convenient to label terminal vertices by their anyonic charge, i.e., the value of the syndrome of the elementary error on that site. Other vertices correspond to world lines of anyons. Vertices are linked by an edge of weight $k$ if the errors between them correspond to moving an anyon of charge $k$ along the orientation of the edge. An edge of weight $k$ connecting site $i$ to site $j$ is equivalent to an edge of weight $d-k$ connecting $j$ to $i$. 

At this point, we have built a directed graph satisfying weight conservation at every vertex.  Indeed, weight conservation in the graph is equivalent to charge conservation of the anyons in this Abelian topological model. Such a graph is called a graph flow. 

\end{itemize}

\begin{figure}
	\centering
		\includegraphics[width=0.22\textwidth]{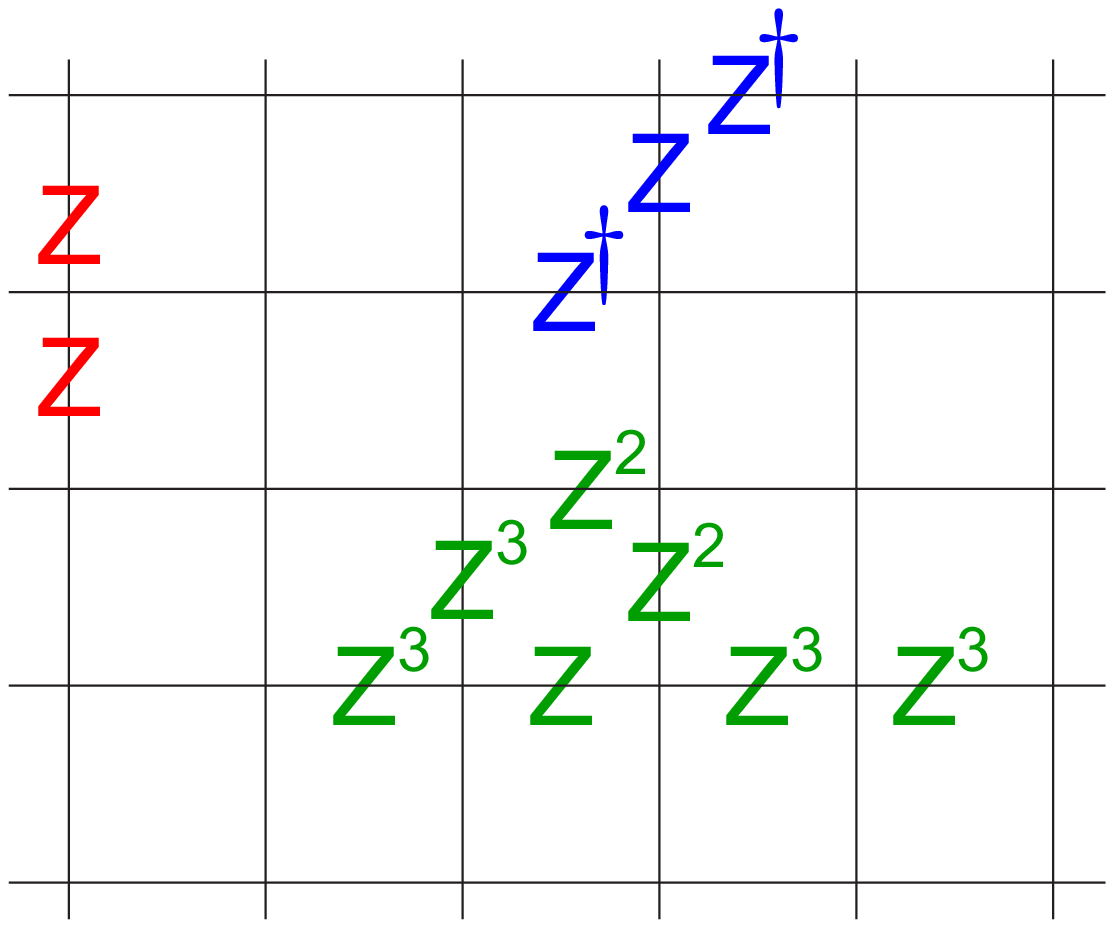}\hspace{0.2cm}
		\includegraphics[width=0.22\textwidth]{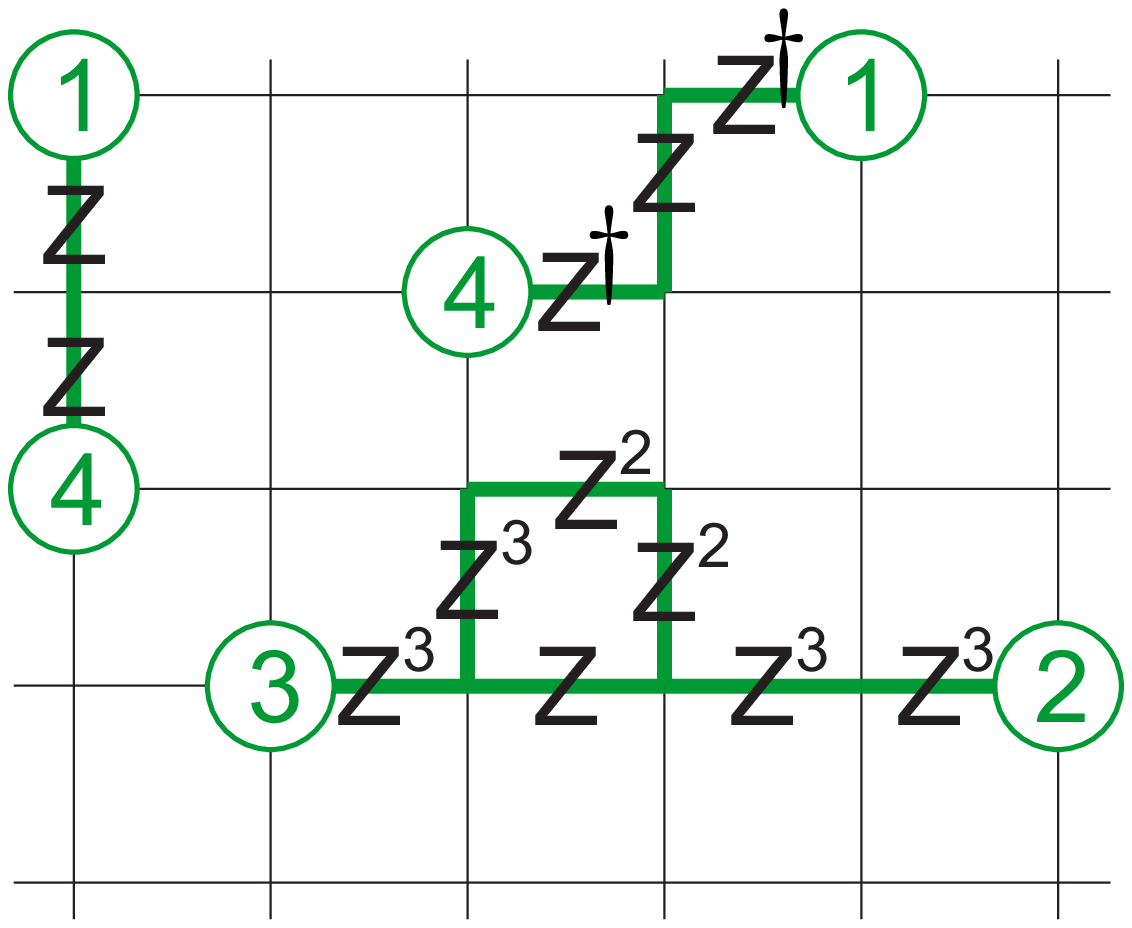}\\ \vspace{0.2cm}
		\includegraphics[width=0.22\textwidth]{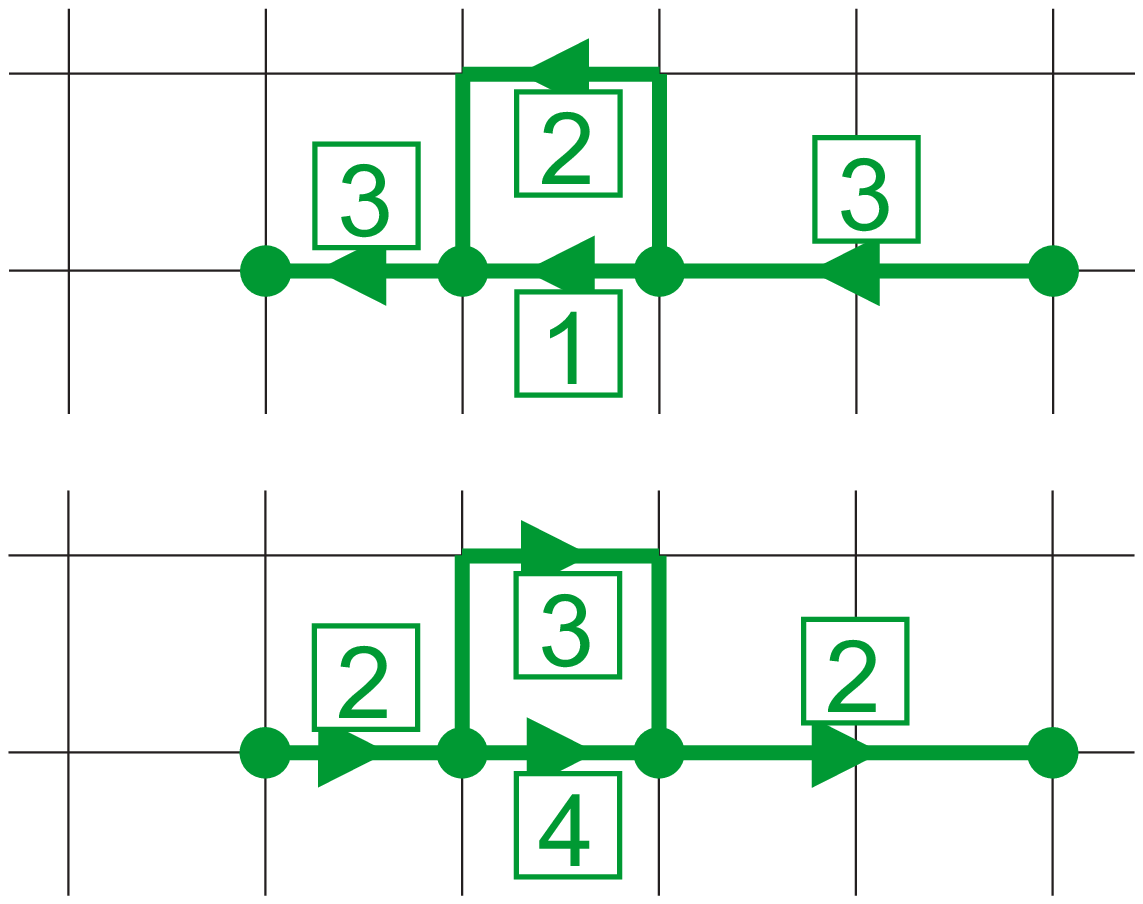}\hspace{0.2cm}
		\includegraphics[width=0.22\textwidth]{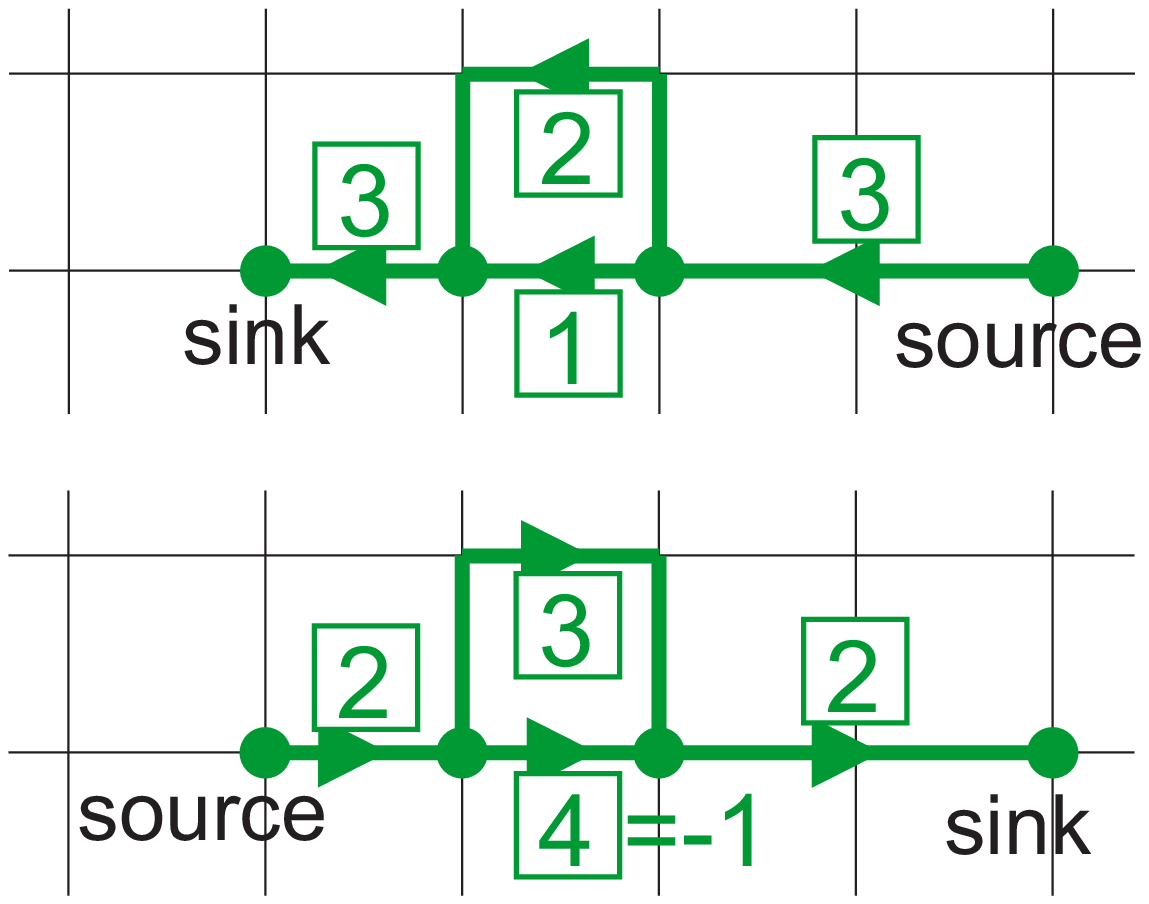}\\ \vspace{0.2cm}
		\includegraphics[width=0.22\textwidth]{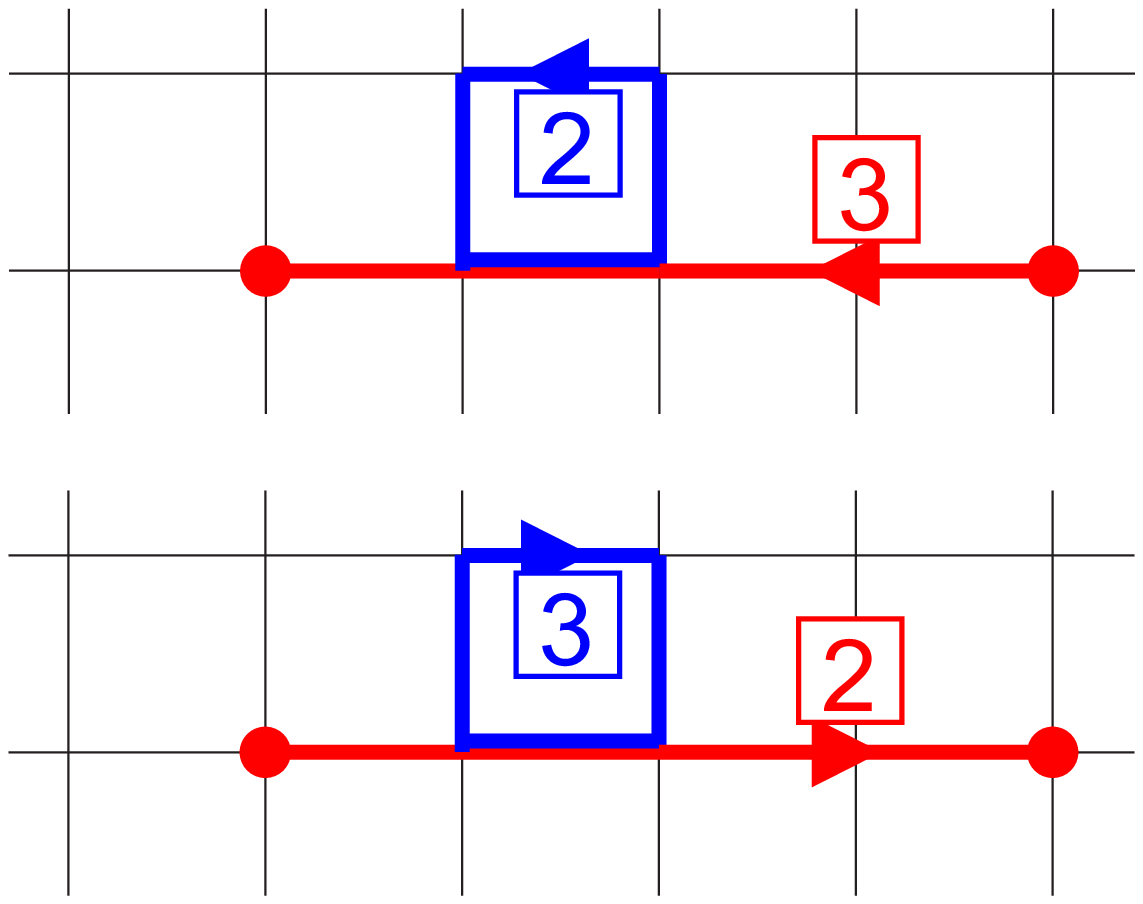}\hspace{0.2cm}
		\includegraphics[width=0.22\textwidth]{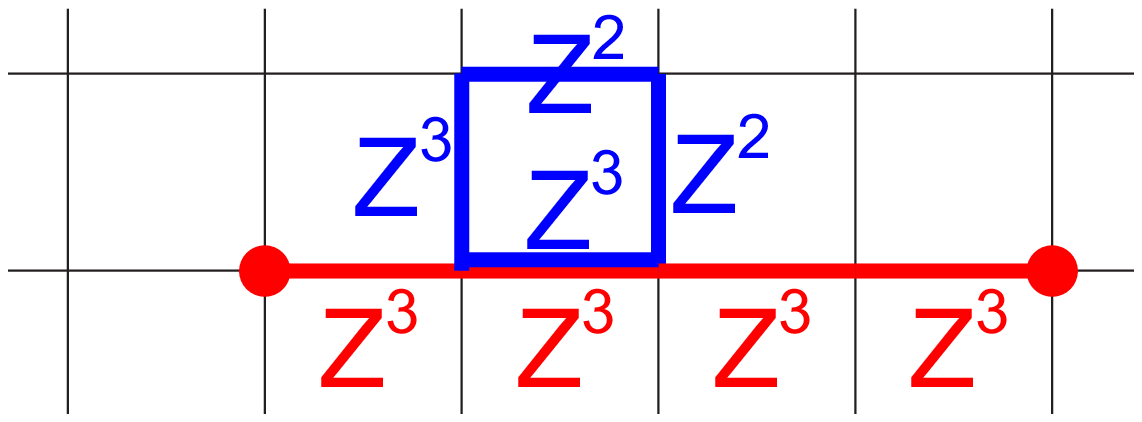}
		\caption{Illustration of the steps towards constructing the optimal canonical path for quantum double $\mathbb{Z}_5$: (a) the support of the error can be partitioned into three connected components, coloured in red, blue and green; (b) each elementary error can be interpreted as fusions, decays and moving of anyons; (c) an elementary error can be mapped onto a fully connected directed graph with weighted edges (reversing the orientation of an edge changes the weight from $k$ to $d-k$); (d) this can be interpreted as a flow of charges; (e) the flow can be partitioned into a rotational (blue) and irrotational (red) part; and (f) these different partitionings of the flow are all equivalent to applying the same combination of operators in either loops or strings.}
	\label{fig:lemmas}
\end{figure}

\subsubsection{Decomposing the graph into cycles and trees}

We now use a well-known result from flow theory: any flow can be partitioned into three sets: a rotational and an irrotational flow and a harmonic component (Helmholtz-Hodge decomposition) \cite{Frankel11}.

On this discrete geometry of a graph, the rotational flow consist of loops (a.k.a cycles), the irrotational flow consists of trees which can be thought as union of strings and the harmonic part consist of irrotational flows on the non-contractible cycles.

This decomposition can be physical interpreted in terms of anyons which we now do in order to evaluate the energy barrier. Remember the rules of the additional energy, defined in Definition.~\ref{def:additional_energy} : an intermediate error has additional energy if an anyon at a given site is not the anyon created by the reference error. The goal is now to build the reference error by introducing as little additional energy as possible. 

\subsubsection{Evaluating the energy barrier} 
\label{sec:energy_barrier_evaluation}

Loops correspond to a particle-antiparticle pair appearing out of the vacuum, then propagating and eventually fusing back to the vacuum. Such a configuration can be created by moving two anyons.  Thus, loops have an energy barrier corresponding to the energy of two anyons.

We now explain how to construct an error whose support is a tree. 

We can consider the tree to be a superposition of strings, each string corresponding to a pair of conjugate anyons which has been created out of the vacuum and then propagated. The terminal vertex of the tree correspond to anyons, conveniently labelled by their anyonic charge. For convenience, choose one of these terminal vertices to be the root of the tree. Other terminal vertices will now be called "leaves". The root is connected to each leaf by a path whose weight is the anyonic charge of the leaf. Each such path is a string operator connecting an anyon (at the leaf) to its conjugate anyon (at the root). See Fig.~\ref{fig:tree2strings} for a graphical example.

\begin{figure} 
	\centering
		\includegraphics[angle=90,origin=c,width= \columnwidth]{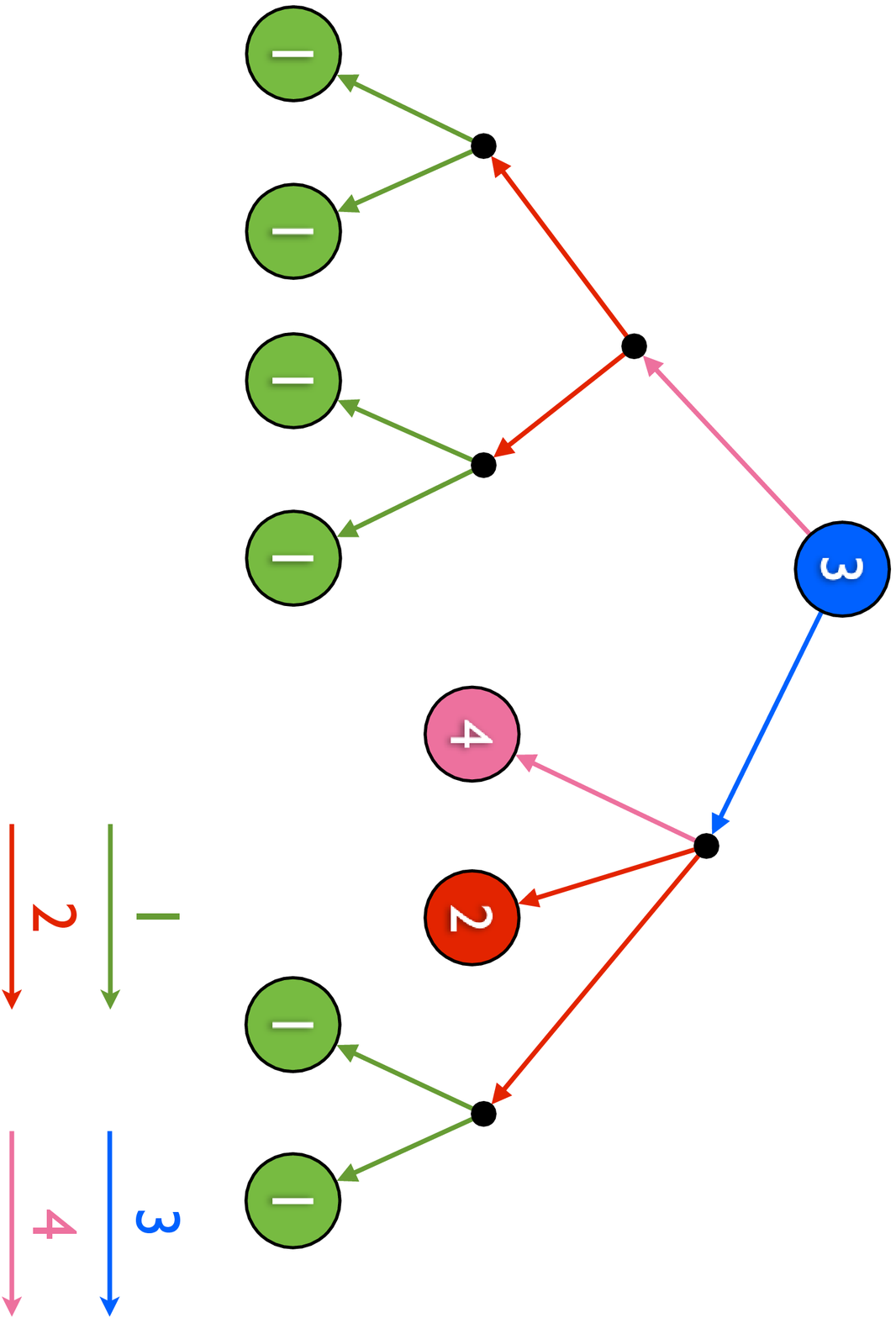} 
		
		\vspace{-1cm}
				
		\includegraphics[angle=90,origin=c,width= \columnwidth]{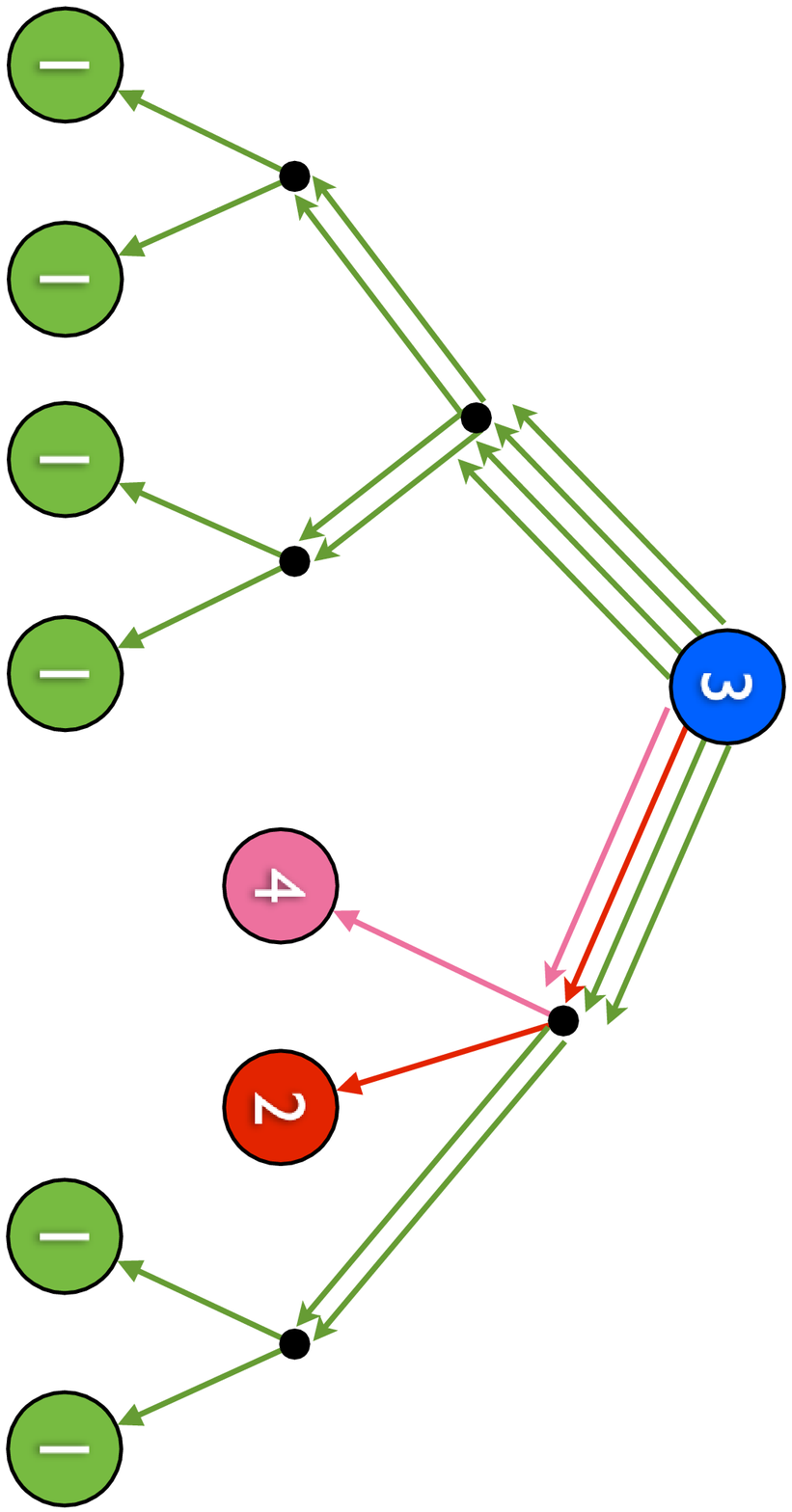}
		\vspace{-2cm}
		\caption{Decomposition of a tree into strings for the quantum double of $\mathbb{Z}_5$. The weight of each edge of the graph is represented by a colour coding.}
	\label{fig:tree2strings}
\end{figure}

We construct the error corresponding to the tree by iteratively choosing a random leaf and then applying the sequence of generalized Pauli operators which create the correct anyon at the site of the leaf and then move its conjugate anyon to the site of the root. We sequentially connect each leaf to the root. During any step, there will be at most two violations, one for the site of the conjugate anyon being moved to the site of the root and one for the anyon at the root which might not have the anyonic charge it should have in the error configuration. At the end of the procedure, the charge of the anyon at the root will be the one it should have in the reference error since the total anyonic charge of the tree is zero. Thus, trees have an energy barrier corresponding to the energy of two anyons, similar to the energy barrier of loops.

We have thus proven the following result:

\begin{theorem}[Energy barrier of Abelian Quantum Doubles]
For any $d$, the generalized energy barrier $\overline{\epsilon}$ of the quantum double of $\mathbb{Z}_d$ is at most the energy of two anyons, i.e., 
\begin{equation}
\overline{\epsilon}(H_{\mathbb{Z}_d}) \leq 2 J_{\textrm{max}}
\end{equation}
\end{theorem}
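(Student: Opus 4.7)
The plan is to bound $\overline{\epsilon}(\boldsymbol{\eta})$ for an arbitrary Pauli $\sigma_{\boldsymbol{\eta}}$ by exhibiting an explicit local errors path whose maximal additional energy, as defined in Eq.~\eqref{eq:additional_energy}, never exceeds the energy cost of two simultaneously excited anyons. Since $\overline{\epsilon}(H)$ is the max over $\boldsymbol{\eta}$, this will establish the theorem. The first reduction is to decompose the geometric support of $\sigma_{\boldsymbol{\eta}}$ into connected components and treat each component independently: since the local Hamiltonian terms have bounded support and two components are separated on the lattice, no star or plaquette operator overlaps two of them, so I can process components sequentially and the energy barrier of $\sigma_{\boldsymbol{\eta}}$ is the maximum of the energy barriers of its elementary (connected) pieces.

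Next, I would recast an elementary error as a combinatorial object. Because the syndrome map is additive (using the Abelian fusion rule $\mathbf{e}(\sigma_{\boldsymbol{\eta}_1} \sigma_{\boldsymbol{\eta}_2} \ket{\Omega}) = \mathbf{e}(\boldsymbol{\eta}_1) \oplus \mathbf{e}(\boldsymbol{\eta}_2)$), I can build a weighted directed graph whose vertices are sites carrying nonzero syndrome charge and whose edges are the single-qudit errors, labelled by their $\mathbb{Z}_d$ weight. Reversing an edge sends $k \mapsto d-k$. Total charge conservation at each internal vertex means the graph is a flow in the combinatorial sense. The crucial observation is that I may apply the Helmholtz-Hodge decomposition: any such flow splits into a rotational part supported on cycles and an irrotational part supported on trees (plus a harmonic piece on non-contractible loops that I would treat in the same way as cycles since it has no sources or sinks).

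For a cycle, I would build the error by nucleating one conjugate anyon pair at an edge and walking one member of the pair around the loop until it reannihilates with its partner. At every intermediate step there are at most two live anyons, neither of which matches a nonzero $\mathbf{e}_s(\boldsymbol{\eta})$ since cycles contribute no syndrome to the final configuration; thus the additional energy is bounded by $2 J_{\max}$. For a tree, I would pick a terminal vertex as root and treat every other terminal vertex as a leaf; by weight conservation along the tree, the path from each leaf to the root is a string operator carrying exactly the leaf's anyonic charge. I would then process leaves one at a time: create the correct leaf anyon together with its conjugate adjacent to it, then transport the conjugate partner along the unique tree path to the root. During this transport only two sites ever carry ``additional'' excitations in the sense of Definition~\ref{def:additional_energy}: the moving conjugate and possibly the root, whose charge is being progressively updated toward its reference value and which by total neutrality of the tree ends with precisely $e_{\text{root}}(\boldsymbol{\eta})$.

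The main obstacle I anticipate is the tree case, specifically verifying that the root contribution is always suppressed by the $(1-\delta_{e_s(\overline{\boldsymbol{\eta}}^t),e_s(\boldsymbol{\eta})})$ factor at the moments it would otherwise push the count above two. This requires choosing the leaf ordering carefully so that after absorbing the $j$-th leaf the accumulated charge at the root is a legitimate partial sum that either already equals $e_{\text{root}}(\boldsymbol{\eta})$ or differs from it in a controlled way, and arguing that during the transit of the conjugate along the path, the root sits at an intermediate charge that, while possibly nonzero and distinct from its final value, is counted only once and contributes at most $J_{\max}$, so together with the single moving anyon the total additional energy is bounded by $2J_{\max}$. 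Once these two local bounds are in hand, taking the maximum over cycles and trees appearing in the decomposition, and then over elementary components, yields $\overline{\epsilon}(\boldsymbol{\eta}) \leq 2 J_{\max}$ uniformly in $\boldsymbol{\eta}$, and hence $\overline{\epsilon}(H_{\mathbb{Z}_d}) \leq 2 J_{\max}$.
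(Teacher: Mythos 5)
Your proposal is correct and follows essentially the same route as the paper: decomposition into connected components, mapping elementary errors to a weighted graph flow, Helmholtz--Hodge splitting into cycles and trees, and the leaf-to-root string construction with at most two live violations (the transported conjugate and the partially accumulated root charge, each costing at most $J_{\max}$). The subtlety you flag about the root is resolved exactly as you suggest and as the paper does: the root is a single site contributing at most $J_{\max}$ regardless of its intermediate charge, and total charge neutrality of the tree guarantees it ends at its reference value.
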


\subsection{Length of the local errors path}\label{sec:canPathLength}

We have established in the previous subsections that the optimal local errors path consists of partitioning the error into a product of errors, each of which is supported either on a loop or a union of strings. The question remains what is the length $\mu(N)$, i.e., the number of steps before the local errors path $\left\{\sigma_{\overline{\boldsymbol{\eta}}^t}\right\}$ converges to the reference error $\sigma_{\boldsymbol{\eta}}$. Formally, define $\left| \left\{\sigma_{\overline{\boldsymbol{\eta}}^t}\right\} \right|$ to be the number of operators needed to converge to a reference error $\sigma_{\boldsymbol{\eta}}$. We only consider \emph{optimal} local errors path, i.e., those which realize $\sigma_{\boldsymbol{\eta}}$ with the minimal energy barrier. We then define the optimal local length of an error to be
\begin{equation}
\mu(\boldsymbol{\eta})=\min_{\substack{ \{\sigma_{\overline{\boldsymbol{\eta}}^t}\}\to \sigma_{\boldsymbol{\eta}} \\ \max_t\overline{\epsilon}(\overline{\boldsymbol{\eta}}^t|\boldsymbol{\eta})   = \overline{\epsilon}( \boldsymbol{\eta})}} \left| \left\{\sigma_{\overline{\boldsymbol{\eta}}^t}\right\} \right|
\end{equation}
and the quantity which enters in the bound of mixing time, Eq.~\eqref{eq:mixing_time_bound} is the maximal optimal local length of errors
\begin{equation} \label{eq:def_mu}
\mu(N)=\max_{\sigma_{\boldsymbol{\eta}}} \mu(\boldsymbol{\eta})
\end{equation}

Note that this is constrained optimization:  we choose to first minimize the energy barrier and then look at the length of the local errors path realizing that minimum. This choice is dictated by the fact that the energy barrier enters the exponential in Eq.~\eqref{eq:mixing_time_bound} whereas the maximal optimal local length of errors $\mu$ is only a multiplicative constant. Nonetheless, $\mu$ is an extensive quantity since for any error, $\mu(\boldsymbol{\eta})$ is lower-bounded by twice the size of the support of the error (the factor two comes from applying the $X$ and $Z$ part of the error independently). Thus, $4N \leq \mu$. 

However, in order to minimize the energy barrier, a given qudit could be affected multiple times by single-qudit operators applied between two intermediate errors. In the language of graph, a given edge of the graph could belong to a large number of loops and trees. Indeed, one has to be careful to avoid such a phenomenon. Here we will show that the loop part of the error can be constructed with a path of length at most $4(d-1)N$ while the string part with a path of length at most $4(d-1)N$, thus the maximal optimal length is $\mu(N) \leq 8(d-1)N$.

\subsubsection{Loops} 

Given a qubit, we want to bound the number of loops which act non-trivially on that qudit. A priori, the number of loops could be very large. However, we can use a simple procedure to reduce it. The idea is to look at the weight of all edges overlapping that qudit and to identify subsets of those weights which sum to $0$ modulo $d$. In that case, we can fuse the corresponding anyons to the vacuum and get new loops which do not affect the qudit. We call this procedure \emph{merging}. An example of merging is presented on Fig.~\ref{fig:merging}. 

This procedure can be repeated on every qudit independently. The question is then to bound the number of loops at the end of merging. In Appendix~\ref{sec:zero-sum_multiset}, we investigate this question using multiset theory and find that the maximal number of loops that can remain after merging is $d-1$ (see Thm.~\ref{thm:max_sum}). Thus, after merging, any qudit belongs to at most (d-1) loops of type X and (d-1) loops of type Z. Since there are $2N$ qudits,  
\begin{equation}
\mu_{\textrm{loops}}\leq4(d-1)N.
\end{equation}

\begin{figure} 
	\centering
		\includegraphics[angle=90,origin=c,width= \columnwidth]{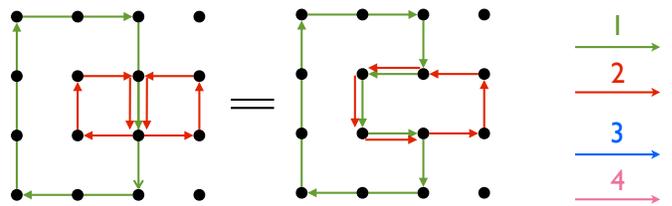} 
		\vspace{-3cm}
		
		\caption{Merging of loops at a qudit initially affected by three loops for the quantum double of $\mathbb{Z}_5$. The weight of each edge of the graph is represented by a colour coding.}
	\label{fig:merging}
\end{figure}

\subsubsection{Strings} A union of strings -- after removing the loops from the structure -- form a tree with several "leaves", the leaves corresponding to the end position of anyons.

It will be necessary to introduce a procedure to "prune the tree", i.e., decompose a tree into a superposition of subtrees without introducing new anyons. This pruning procedure was not necessary to prove that the generalized energy barrier is at most $2J_\textrm{max}$, but will prove useful to bound the length of the canonical path.

The pruning procedure identifies subtrees that can be removed from the original tree. Those subtrees should have leaves whose anyonic charge sum to zero modulo $d$ so that they can be removed without affecting the root. Before identifying those subtrees, it is convenient to first "fatten the tree" by connecting every leaf of weight $k$ to the root through a string of weight $k$. The pruning procedure then proceeds by visiting every vertex of the tree (for instance using a post-order depth first search~\footnote{This specific ordering of the vertices allows to find subtrees with small depth but is not crucial to our argument.}). At every vertex of the tree, it checks whether there exists a subset of edges with zero-sum. If so, the pruning procedure removes the subtree generated by the corresponding leaves. After visiting every vertex, the pruning stops. The tree is now decomposed into \emph{simple} trees.

Simple trees have at most $d-1$ leaves since any set of $d$ anyons contains a subset whose sum is zero modulo $d$ (see Thm.~\ref{thm:max_sum}). Their depth is thus bounded by $d-1$ Also, every vertex of a simple tree belongs to at most $d-1$ strings.  

Thus, after pruning, every qudit belongs to at most $(d-1)$ loops of type $X$ and $(d-1)$ loops of type $Z$. Since there are $2N$ qudits,  
\begin{equation}
\mu_{\textrm{strings}}\leq4(d-1)N.
\end{equation}

\subsection{The effect of defect lines}

For Abelian quantum double, it is possible to locally modify the Hamiltonian in order to introduce defect lines, such as in the work of Brown et al.~\cite{BAP14}. Defect lines are characterized by an invertible element $M\in\mathbb{Z}_d$ and an orientation. An anyon of type $k\in\mathbb{Z}_d$ crossing a defect line of type $M$ along the orientation (resp. against the orientation) will be transformed into an anyon of type $M\cdot k$ (resp. $M^{-1}\cdot k$). What do we mean by "transformed"? Consider two vertices $(v_-,v_+)$ on the lattice, one on each side of the defect line such that the orientation points from $v_-$ to $v_+$. There exists a local Pauli operator which maps a +1 eigenstate of $P^k_{v_i}$ to a +1 eigenstate of $P^{M\cdot k}_{v_i}$. In other words, an excitation will locally at energy $J_k$ become an excitation carrying energy $J_{M\cdot k}$. 

In~\cite{BAP14}, Brown et al. proposed a local 2D Hamiltonian which seemed to realize entropy protection~\cite{BAP14}. This model is the quantum double of $\mathbb{Z}_5$; and due to charge-flux duality, we're allowed to think only in terms of e.g. electric charges. Then there are $5$ different charges, grouped as: vacuum, light particle, heavy particle, heavy antiparticle, light antiparticle. Particle-antiparticle pairs have the same mass, furthermore $m_{\textrm{heavy}} > 2 m_{\textrm{light}}$ to ensure that during the thermal evolution of the system it is favorable for the heavy particles to decay into two light particles. In order to favor the occurrence of heavy particles, the authors of Ref.~\cite{BAP14} introduced defect lines of type $M=2$ to the system. When a light particle crosses such a line, it becomes a heavy one and vice versa. Thus, the excitations in the model are typically light particles which propagate freely until they eventually cross a defect line, acquire mass, and then decay into two light particles. It was observed numerically in~\cite{BAP14} that the memory time seems to behave like $t_{\textrm{mem}} \propto \exp(c\beta^2)$ over some range of parameters but seems to fail for large $\beta$.  Can our bound shed new light on this model? To that end, we now analyze the effect of those defect lines on our bounds. 

 \subsubsection{Syndromes for the Hamiltonian with defect lines}
 
One could wonder whether the definition of the energy barrier given by Eq.~\eqref{eq:energybarrier} should be changed due to the introduction of defect lines. It does not. However, the Hamiltonian changed and thus the syndromes of Pauli errors will change too. Given a Pauli error $\boldsymbol{\xi}\in\mathcal{P}_{2N}$, its syndrome with respect to the new Hamiltonian $e^{\textrm{new}}(\boldsymbol{\xi})$ is related to the syndrome $e^{\textrm{no defect lines}}(\boldsymbol{\xi})$ it had in the absence of defect lines by simply multiplying the syndrome by the \emph{defect line string}  $T_1\in\mathbb{Z}_d^{2N}$,

\begin{equation}
e^{\textrm{new}}(\boldsymbol{\xi}) = (T_1) \cdot e^{\textrm{no defect lines}}(\boldsymbol{\xi})
\end{equation}
where multiplication is understood ditwise and modulo $d$. The \emph{defect line string} $T_1\in\mathbb{Z}_d^{2N}$ is defined for every site $s$ by
\begin{equation}
\left(T_{1}\right)_{s}=\begin{cases}
M & \mbox{if \ensuremath{s} near (and on the "\ensuremath{-}" side of) a defect line } \\
1 & \mbox{otherwise}
\end{cases}
\end{equation}

Therefore, there is a consistent way to get the syndromes of the quantum double $\mathbb{Z}_d$ with defect lines, and we can use this new set of syndromes to work through the same steps in the derivation as we did for the quantum doubles without defect lines. These two derivations will essentially be identical -- except for the different definitions of the syndromes -- and we will arrive to the same formula for the energy barrier.

\subsubsection{Globally consistent labelling of anyon types in the presence of defect lines}

The only remaining question is: knowing that the definition of the energy barrier is the same with defect lines, does the evaluation of the energy barrier detailed in the previous sections go through the same way? The main issue is how to label the excitations. Indeed, due to the presence of defect lines, the \emph{local} labelling of the anyon type is not consistent \emph{globally}.

Here we explain how to recover a global labelling of anyon types, under one technical condition we call \emph{consistency} of defect lines. We define the consistency of the defect lines of a model by requiring that when we create a pair of anyons from vacuum, then take one of them around any loop anywhere on the lattice, they fuse back to vacuum with each other. Should that transparency condition be violated, the intersection of defect lines would become a sink and a source for single anyons, which we forbid. Furthermore, we do not know how the Hamiltonian of such a pathological model would be written down in a form similar to Eq.~\eqref{eq:QDoubleHamiltonian}.

Thus, we consider \textit{consistent} defect lines. Our goal is to take the globally inconsistent, \emph{local} anyon syndromes which is a record of the eigenvalues of the $A(v)$/$B(p)$ star/plaquette operators at each site, and translate them to a consistent, \emph{global} labeling of anyons. This translation is obtained through a global dictionary $T_2\in\mathbb{Z}_d$ using the formula 
\be
e^{\textrm{global}} (\boldsymbol{\xi}) = (T_2) \cdot e^{\textrm{local}} (\boldsymbol{\xi}) ,
\ee
where $\cdot$ is multiplication ditwise and modulo $d$. To define the global dictionary, the idea is to label each region enclosed by defect lines. The anyon types will be defined in one (arbitrary) reference region and all other regions will carry a label to translate the local anyon type within its region to what it would be in the reference region (global syndrome).

For instance, for $\mathbb{Z}_5$, with $M=2$ an anyon type $a$ in the reference region might become: $a$ or $2a$ or $4a$ or $8a(=3a\mod 5)$, depending in which region we observe it. We can name these regions, e.g. $L= 1$, $2$, $4$ and $3$ in the above example. Whenever we observe an anyon whose local type is $b$ in a region with a label $L$, we know that anyon would have a local type $b'=L^{-1}b$ in the reference region (or any trivial $L=1$ region). Thus, the $T_2$ dictionary is defined for every site by 
\be
(T_2)_s = L^{-1} \;\; \textrm{for $s \in$ region with label $L$} .
\ee

\subsubsection{Evaluation of the generalized energy barrier and maximum length of the optimal local errors path}

Finally, introducing defect lines doesn't change the allowed anyon fusion/decay processes either, since the fusion rules are the same as before in every region. Whenever a particle crosses a defect line it is essentially just renamed, i.e., it doesn't leave behind a charge at the defect line. Using the fact that the syndromes can be made consistent with the procedure of tracing all anyons back to the $L=1$ regions, any error can still be mapped onto a graph flow of anyons, and the plan for constructing any generalized Pauli error described in Sec.~\ref{subsec:en_barrier_construction} still works. Thus, the value of the energy barrier and the maximum length of the optimal canonical path is unchanged as well: $\bar{\epsilon}=2 J_\textrm{max}$ and $\mu(N) \leq 4N^2 + 4 (d-1) N$.

Therefore, as neither the definition of the energy barrier, nor the structure of errors, nor the optimal canonical path for a certain error, nor the length of this path is changed by defect lines, the Arrhenius law bound itself is unchanged by the defect lines.
\section{Discussion}
\label{sec:Disc} 

\subsection{Possible improvements}

We briefly review some possible improvements on our bounds, indicate possible avenues to achieve those improvements and conjecture what the optimal bounds would be.

The polynomial dependence of the Arrhenius bound on mixing time can probably be improved. Indeed, we expect that better techniques would allow to get rid of the $N$ prefactor in Eq.~\eqref{eq:mixing_time_bound}. However, the polynomial dependence of the length of the longest optimal local errors path $\mu \sim N$ is tight since one can find errors whose length are of the order of the number of qudits. Thus, we expect the mixing time to scale with system size. The extensiveness of mixing time is coherent with the intuition that some system relax locally. 

However, the quantum memory time might be much shorter than the mixing time. A dramatic example is the three-dimensional toric code whose quantum memory time is constant whereas its mixing time is exponentially long. Indeed, one of the logical operator is string-like whereas the other logical operator is supported on a 2D sheet of qudits. The expectation value of the sheet-like logical operator thermalizes in exponential time whereas the expectation value of the string-like logical operator is short-lived. We expect the quantum memory time of 2D Abelian quantum double to be a constant, independent of system size.

\subsection{Implication for entropy protection}

In \cite{BAP14}, authors investigate the quantum memory time of an Abelian quantum double with $d=5$ with defect lines. By tuning the masses of anyons, they obtain a thermal dynamic in which  the typical world lines of anyons have a fractal structure. Indeed, heavy particles, rather than  propagate,  will (with high probability) decay into two light particles propagating independently; while light particles will eventually cross a defect line, become a heavy particle (at an energy cost), which then decays into two light particles. Brown et al. numerically observe a super-exponential scaling of the memory time which they explain to be the result of this fractal structure of the world lines of excitations. It is called "entropic protection" as the world lines only have a fractal structure and thus there's a scaling energy barrier for a \emph{typical} worldine of anyons. There are, in fact, world lines taking the system to an orthogonal ground state with only a constant energy cost, however, the probability of such a world line are entropically suppressed.

Applying the result of the present paper to this model, we can see that its memory time is upper bounded by a strict Arrhenius law, with an energy barrier that has no dependence on temperature or system size, even when including the effect of permuting type defect lines. Since our bound is valid for any value of the inverse temperature $\beta$, we can see that the $\exp (c \beta^2)$ scaling observed in Brown's entropic code needs to break down for sufficiently low temperatures, as the memory time can't exceed our bound. This breakdown at low temperature was forecasted in~\cite{BAP14}. Indeed, for low temperature the thermal process resulting in fractal-like world lines of anyons is not typical anymore, since the environment can't provide the energy required for a light particle to become heavy. Rather, a light particle near a defect line won't cross, but linger there until it meets with another particle and fuse with it either to vacuum or to a heavy particle. If fused to a heavy particle, that heavy particle can then cross the defect line and lower the energy by becoming a light particle. 

The low temperature behaviour of Brown's entropic code agrees with the fact that our bound doesn't allow it to have a better than exponential memory time. The scaling observed in Ref.~\cite{BAP14} is most likely limited to the region discussed there, and needs to break down for temperatures out of that region.

One question that remains open is whether the super-exponential behaviour they observe is an artefact of their construction, e.g., of the decoder or a physical property of the model. Indeed, one could imagine that the introduction of defect lines does change the thermal behaviour of the model over some temperature region. The super-exponential scaling could then be understood as an \emph{entropic enhancement}. While this enhancement does not translate into a qualitatively different scaling at low temperature, it could introduce a multiplicative gain inside the exponential scaling.

\begin{figure} 
	\centering
		\includegraphics[angle=0,origin=c,width= \columnwidth]{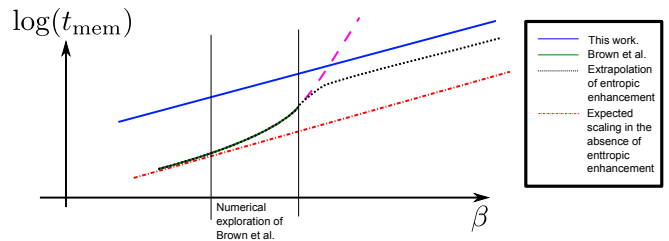} 
		
		\caption{Conceptual scaling of the logarithm of the quantum memory time $t_{\mathrm{mem}}$ as a function of the inverse temperature $\beta$. }
	\label{fig:entropy_enhancement}
\end{figure}

Our bound, however, is more general than to only exclude the possibility of entropic protection for the specific construction of Brown et al. The result presented in the present paper means that entropic protection doesn't exist for any Abelian quantum doubles (with or without permuting type defect lines); in order to have a self-correcting memory based on such models, one needs a scaling energy barrier. 
However, we should remark that a scaling energy barrier does not always ensure self-correction, as seen in the example of the welded code~\cite{Michnicki14} which is expected to have a memory time which is independent of system size~\cite{BLP+14}.



\section{Details of the derivation}
\label{sec:derivation}

The derivation of the upper bound and the generalized energy barrier for the $\mathbb{Z}_d$ generalized case follow the steps outlined in Ref.~\cite{Temme14} for the $\mathbb{Z}_2$ model.  Although the approach for the  $\mathbb{Z}_d$ - Stabilizer models is very similar to the one presented in Ref.~\cite{Temme14}, the derivation differs in several key steps from $\mathbb{Z}_2$ due to the increased complexity of the model. In this section we present the general approach of the derivation with an emphasis on the differences from the $\mathbb{Z}_2$ case.\\

To obtain the bound on the thermalization time presented in Eqn. (\ref{eq:mixing_time_bound}), we need to take two steps. First, we bound the mixing time $t_{{mix}}$ in terms of the spectral gap $\lambda$ of the Davies generator, and then we proceed to prove a lower bound on the spectral gap $\lambda$.  To obtain the bound on the mixing time in terms of the gap, we employ an upper bound to the convergence in trace norm distance derived in Ref.~\cite{TKR+10}. For any initial state $\rho_0$, that evolves according to some semi-group $\rho_t = \exp(t \cL) \rho_0$, we can bound the distance to its fixed point $\rho_G$ as $ || \rho_t - \rho_G ||_{tr} \leq \sqrt{|| \rho_G^{-1} ||} e^{-\lambda t} $. Since the Davies generator converges to the Gibbs state $\rho_G = Z^{-1}\exp(-\beta H)$ we find $|| \rho_G^{-1} || = \cO(\exp(c_0 \beta N))$, with some model specific constant $c_0$. Given a lower bound $0 < \mu \leq \lambda$ to the spectral gap, this bound would immediately imply an upper bound to the mixing time given by 

\begin{equation} \label{eq:gap_mixingtime}
t_{\textrm{mix}} \leq \mathcal{O} (\beta N \lambda^{-1}) \leq \mathcal{O} (\beta N \mu^{-1}).
\end{equation}

To arrive at a lower bound to the spectral gap of the generator $\cL$, we can make use of a variational expression for the spectral gap $\lambda$. Since the fixed point of $\cL$ is the Gibbs state, and we furthermore know that the Davies generator is Hermitian with respect to a weighted Hilbert-Schmidt inner product \cite{Davies79,Davies76}, we can express the the gap in terms of two quadratic forms. We define the Dirichlet form, $\mathcal{E}(f,f) = - \left< f, \mathcal{L}^*(f) \right>_\beta = - tr[\rho_G f^\dagger \mathcal{L}^*(f)]$ and the variance, $Var(f,f) = tr[\rho_G f^\dagger f] - |tr[\rho_G f]|^2$. With these two quadratic forms we can express the spectral gap as 

 \be
 	\lambda = \min_{f \in \cM_{d^N}} \frac{\mathcal{E}(f,f)}{Var(f,f)}.
\ee

For a simple proof of this identity the reader is referred to \cite{DS91,Fill91,TKR+10}. Hence any constant $\mu > 0$ serves as a lower bound to the spectral gap if for all $f \in \cM_{d^N}$ the Poincare inequality, $\mu Var(f,f) \leq \mathcal{E}(f,f)$, holds. Naturally, the largest possible $\mu$ coincides with $\lambda$. We will now use this inequality to derive a lower bound to the spectral gap. Note that this problem can be 
rephrased as an inequality for positive semi definite matrices. Since both $\cE(f,f)$ as well as $Var(f,f)$ are quadratic forms in $f$, we can define two matrices, $\hat{\cE}$ and $\hat{\cV}$ that correspond to the matrix 
representations of these forms. Further using the detailed balance condition we express $\cE(f,f) = \tr{f^\dagger \hat{\cE}(f)}$ and $Var(f,f) = \tr{f^\dagger \hat{\cV}(f)}$, where we have now interpreted $\cM_{d^N}$ as a Hilbert space with the canonical inner product. In this case the spectral gap can be defined as $\tau = \lambda^{-1}$, where $\tau$ is the smallest positive number so that $\tau \hat{\cE} - \hat{\cV} \geq 0$, here any upper bound to $\tau$ constitutes a lower bound to the spectral gap. We perform the following steps to find such an upper bound to $\tau$ and in turn the lower bound $\mu=\tau^{-1}$ to the gap. Due to the similarity to the $\mathbb{Z}_2$ case, the reader is referred to a \cite{Temme14} for a more detailed exposition of the steps and proofs of the lemmata we need. We discuss only the particular differences to the binary case in detail here.\\
	
\subsection{Diagonalizing the Hamiltonian, then deriving the jump operators of the Liouvillian.} \label{subsec:DiagHam_JumpOps}
	
Since the quantum double model is comprised of commuting projectors, it is straight forward to diagonalize the full Hamiltonian (\ref{eq:QDoubleHamiltonian}). We diagonalize the pure system Hamiltonian, 
by labeling the projectors for every subspace in terms of the error syndromes assigned to different error configurations introduced in Sec.~\ref{sec:quantum_doubles}. In order to be able to encode quantum information into the ground state of this Hamiltonian, a degeneracy of ground states is required. This can be achieved by defining the square lattice on a surface with non-zero genus or by special boundary conditions.

The diagonalized Hamiltonian can be written as:

\begin{equation}
H = \sum_{(\mathrm{a},\mathrm{b})} \epsilon (\mathbf{a},\mathbf{b}) \Pi (\mathbf{a},\mathbf{b}) ,
\end{equation}

with projectors and energies

\begin{eqnarray}\label{HamProj}
\Pi (\mathbf{a},\mathbf{b}) &=& \left( \prod_v P^{a(v)}_v \right) \left( \prod_p Q^{b(p)}_p \right), \\
\epsilon (\mathbf{a},\mathbf{b}) &=& \sum_v J^{a(v)}_v + \sum_p J^{b(p)}_p ,
\label{eq:HamEnergies}
\end{eqnarray}

where $P^{a(v)}_v$ and $Q^{b(p)}_p$ are the projectors onto different chargeons and fluxons at vertex $v$ and plaquette $p$ introduced before, $J^{a(v)}_v$ and $J^{b(p)}_p$ are the masses corresponding to these anyons. The eigenstates of this Hamiltonian thus correspond to different anyon configurations on the lattice, and the states can be labelled by syndromes of the form $(\mathbf{a},\mathbf{b}) = (a_1, a_2, ... a_{N}, b_1, b_2, ... b_{N}) \in Z_d^{N + N}$.

The diagonalization of $H$ implies that we can write the Gibbs state of the system as
\be
	\rho_G = \sum_{(\mathbf{a},\mathbf{b})} \rho_{\mathbf{ab}}\Pi (\mathbf{a},\mathbf{b}), \Sp \mbox{where} \Sp \rho_{\mathbf{ab}} = \frac{e^{-\beta\epsilon(\mathbf{a},\mathbf{b})}}{Z}.
\ee

Note that the Projectors $\Pi(\mathbf{a},\mathbf{b})$ are obtained as a $\bZ_d$ Fourier transform of powers of the star $A(v)$ and plaquette $B(p)$ terms as defined in Fig.~\ref{fig:operators} (a). We can write:

\begin{eqnarray}\label{eq:projector_paulibasis}
\Pi (\mathbf{a},\mathbf{b}) &=& \frac{1}{d^{2N}} \sum_{(\mathbf{x},\mathbf{y})} e^{\frac{2 \pi i}{d} \left( <\mathbf{a},\mathbf{x}> + <\mathbf{b},\mathbf{y}> \right)} \sigma_{\bar{\mathbf{x}}} \bar{\sigma}_{\bar{\mathbf{y}}},
\end{eqnarray}

where $\sigma_{\bar{\mathbf{x}}} = A^{x_1}(1) A^{x_2}(2) \cdots A^{x_N}(N)$ and $\bar{\sigma}_{\bar{\mathbf{y}}} = B^{y_1}(1) B^{y_2}(2) \cdots B^{y_N}(N)$. Observe that we have introduced new labels $\bar{\mathbf{x}}$ and $\bar{\mathbf{y}}$, which are linear functions of $\mathbf{x}$ and $\mathbf{y}$ respectively and are defined by the decomposition of $\{ A^{x_i}(i) \}$ and $\{ B^{y_i}(i) \}$ (which act on vertices and plaquettes) into generalized Pauli operators, i.e., the $\{ X_j^{\bar{x}_j} \}$'s and $\{ Z_j^{\bar{y}_j} \}$, which act on edges of the model, so that $A^{x_1}(1) A^{x_2}(2) \cdots A^{x_N}(N) = X_1^{\bar{x}_1} X_2^{\bar{x}_2} \cdots X_{2N}^{\bar{x}_{2N}} = \sigma_{\bar{\mathbf{x}}}$ and $B^{y_1}(1) B^{y_2}(2) \cdots B^{y_N}(N) = Z_1^{\bar{y}_1} Z_2^{\bar{y}_2} \cdots Z_{2N}^{\bar{y}_{2N}} = \bar{\sigma}_{\bar{\mathbf{y}}}$. \\

The jump operators of the Davies generator are generated by generalized Pauli errors acting on a single spin. The commutation relations of single generalized Pauli's with the Hamiltonian projectors (\ref{HamProj}) is given by:
\begin{equation} \label{commPauli}
Z_j^{l_j} X_j^{m_j} \Pi (\mathbf{a},\mathbf{b}) = \Pi (\mathbf{a}\oplus \mathbf{e}(l_j),\mathbf{b}\oplus \mathbf{e}(m_j)) Z_j^{l_j} X_j^{m_j} ,
\end{equation}
where $\mathbf{e}(l_j) = (0,\ldots 0, l_j, -l_j, 0 \ldots 0)$ and $\mathbf{e}(m_j) = (0,\ldots 0, m_j, -m_j, 0 \ldots 0)$ are length $N$ vectors whose only nonzero elements correspond to the vertices at the ends of edge $j=(v,v')$ and the plaquettes $(p,p')$ which contain edge $j$. Alternatively, $\mathbf{e}(l_j)$ ($\mathbf{e}(m_j)$) is the syndrome of the excited state created by applying the error $Z_j^{l_j}$ ($X_j^{m_j}$) to the vacuum state. That excited state contains two conjugate anyons of charges $\pm l_j$ (of fluxes $\pm m_j$) located on the vertices $v$ and $v'$ (plaquettes $(p,p')$ that contain the edge $j$).

Star and plaquette operators have one east edge E, one south edge S, one west edge W and one north edge N. Due to the construction of these operators: $A(v) = \{X_E \otimes X_S \otimes X^\dagger_W \otimes X^\dagger_N ; (E,S,W,N)\in star(v)\}$ for star operators  and $B(p) = \{Z_E \otimes Z^\dagger_S  \otimes Z^\dagger_W \otimes Z_N; (E,S,W,N)\subset plaquette(p)\}$ for plaquette operators, a horizontal edge $j=(v,v')$ overlaps with the $X$ operator of the star operator $A(v)$ west of edge $j$ and the $X^\dagger$ operator of the star operator $A(v')$ east of $j$. (Similarly, a horizontal edge $j$ overlaps with the $Z^\dagger$ operator of the plaquette operator north of $j$, and $Z$ operator of the plaquette operator south $j$.) Therefore, one of the nonzero elements in $\mathbf{e}(l_j)$ is $+l_j$ and the other is $-l_j$ ($+m_j$ and $-m_j$ in $\mathbf{e}(m_j)$). 

Since the generalized Pauli basis is a complete matrix basis, any one local operator at edge $j$ can be decomposed as $S_{j,(l'_j,m'_j)} = \sum_{(l_j,m_j)} [s]_{(l_j,m_j)}^{(l'_j,m'_j)} Z_j^{l_j} X_j^{m_j}$. In order to obtain the jump operators from Eqn. \eqref{eq:jump_operator_omega}, we use the commutation relations (\ref{commPauli}) and obtain

\begin{eqnarray}
S_{j,(l'_j,m'_j)} (\omega) &=& \sum_{(\mathrm{a},\mathrm{b})} \sum_{(l_j,m_j)} Z_j^{l_j} X_j^{m_j} [s]_{(l_j,m_j)}^{(l'_j,m'_j)} \Pi (\mathbf{a},\mathbf{b}) \\
&& \times \delta [\omega - \epsilon (\mathbf{a} \oplus \mathbf{e}(l_j),\mathbf{b} \oplus \mathbf{e}(m_j)) + \epsilon (\mathbf{a},\mathbf{b})]\nonumber ,
\end{eqnarray}

where $\epsilon (\mathbf{a},\mathbf{b})$ is the energy of the system before, while $\epsilon (\mathbf{a} \oplus \mathbf{e}(l_j),\mathbf{b} \oplus \mathbf{e}(m_j))$ is the energy configuration of the system after applying the thermal errors. 
Note that for ease of notation we have defined $\delta[x] = 1$, whenever $x = 0$ and $\delta[x] = 0$ otherwise.\\

We point out a couple of significant differences from the $\mathbb{Z}_2$ case. First, much like the standard Pauli operators the generalized $Z_j^{l_j}$ and $X_j^{m_j}$ generate a complete local unitary matrix basis, so that any local error can be expressed as a sum of their products. However, these matrices are not Hermitian. This means that the coefficients $[s]_{(l_j,m_j)}^{(l'_j,m'_j)}$ need to obey special constraints to ensure Hermiticity of the coupling operators $S_{j,(l'_j,m'_j)}$. As discussed before we make the particular choice that $S_{j,(l_j,m_j)} = 2^{-1/2}(Z_j^{l_j} X_j^{m_j} + h.c. )$. As we will see, this will eventually result in terms appearing in the Liouvillian that are proportional to $\1_j\Pi (\mathbf{a},\mathbf{b})$ and terms that are proportional to $Z_j^{2l_j} X_j^{2m_j} \Pi (\mathbf{a},\mathbf{b})$. We are familiar with the first kind of term from the case of $\bZ_2$ -stabilizers. However, the cross terms $Z_j^{2l_j} X_j^{2m_j} \Pi (\mathbf{a},\mathbf{b})$, do not vanish automatically unless we consider a small lift of the accidental degeneracy in the Hamiltonian spectrum. They disappear when introducing a small spatial perturbation in the masses of different particles, since the delta function $\delta[x]$ vanishes on the slighly perturbed spectrum.

With these derivations for $S_{j,(l'_j,m'_j)}(\omega)$ it is in principle possible to state the Davies generator from Eqn (\ref{eq:DaviesGen}) explicitly. Since the Hamiltonian is comprised of only local commuting terms, one can verify that after performing the sum over $\omega$ in Eqn (\ref{eq:DaviesGen}), one is left with a Lindbladian that can be written as the sum of local terms; as done, e.g. in Ref.~\cite{VRM12}. Note, however, that we're taking another approach, as the representation mentioned above is not particularly helpful for our derivation of the spectral gap as it obfuscates the underlying general algebraic structure. This structure is best understood in terms of the action of the generator $\cL$ on a suitable matrix basis.\\

\subsection{Construction of the Dirichlet matrix and $\hat{\cE}$ and the variance matrix $\hat{\cV}$}
		
In order to get a good handle on the matrix pair $(\hat{\cE},\hat{\cV})$, we need to choose a suitable matrix basis of the space $\cM_{d^N}$. It turns out that the canonical choice is also the most suitable. We define the tensor product of the generalized Pauli matrices as our basis through
\be
\bar{\sigma}_{\mathbf{k}} \sigma_{\mathbf{p}} = Z_1^{k_1} Z_2^{k_2} ... Z_{2N}^{k_{2N}} X_1^{p_1} X_2^{p_2} ... X_{2N}^{p_{2N}},	
\ee
where $Z_j^{k_j}$ and $X_j^{p_j}$ are the generalized Pauli matrices introduced in Eqn. (\ref{eq:GenPauli}). This matrix basis is orthogonal with respect to the Hilbert-Schmidt inner product, i.e., $\left(\bar{\sigma}_{\mathbf{k}} \sigma_{\mathbf{p}}|\bar{\sigma}_{\mathbf{k'}} \sigma_{\mathbf{p'}}\right) \propto \delta_{k,k'} \delta_{p,p'}$. Here we denote the vectorization of $\bar{\sigma}_{\mathbf{k}} \sigma_{\mathbf{p}}$ by $ \left|\bar{\sigma}_{\mathbf{k}} \sigma_{\mathbf{p}}\right)$. 

\subsubsection{Dirichlet matrix}

We first turn to the derivation of the Dirichlet matrix $\hat{\cE}$, since this matrix proves to be more challenging. Recall the defintion of $\cE(f,f) = - tr[\rho_G f^\dagger \mathcal{L}^*(f)]$. Due to detailed balance, it turns out to be very useful to investigate the action of the map $-\rho_G\cL(\cdot)$ on $\bar{\sigma}_{\mathbf{k}} \sigma_{\mathbf{p}}$. 

Before we state the action of the generator on a basis element, we need to introduce some shorthand notation.  For the syndrome vectors $(\mathbf{a},\mathbf{b})$ and $(\mathbf{c},\mathbf{f})$  and for the error vector $(\mathbf{k},\mathbf{p})$ , corresponding to applying the operator $\bar{\sigma}_{\mathbf{k}} \sigma_{\mathbf{p}}$ to the system, we define the functions:
\bq \nonumber
H_{(\mathbf{a},\mathbf{b}),(\mathbf{c},\mathbf{f})}^{(\mathbf{k},\mathbf{p})} &=& \gamma (\omega^{(\mathbf{k},\mathbf{p})} (\mathbf{a},\mathbf{b}) \delta [\omega^{(\mathbf{k},\mathbf{p})} (\mathbf{a},\mathbf{b}) - \omega^{(\mathbf{k},\mathbf{p})} (\mathbf{c},\mathbf{f})], \\  \nonumber
\tilde{H}_{(\mathbf{a},\mathbf{b}),(\mathbf{c},\mathbf{f})}^{(\mathbf{k},\mathbf{p})} &=& \gamma (\omega^{(\mathbf{k},\mathbf{p})} (\mathbf{a},\mathbf{b})) + \gamma (\omega^{(-\mathbf{k},-\mathbf{p})} (\mathbf{a},\mathbf{b})) \\
&&+ \gamma (\omega^{(\mathbf{k},\mathbf{p})} (\mathbf{c},\mathbf{f})) + \gamma (\omega^{(-\mathbf{k},-\mathbf{p})} (\mathbf{c},\mathbf{f})). \nonumber
\eq 
We have introduced the Bohr frequencies $\omega^{(\mathbf{k},\mathbf{p})} (\mathbf{a},\mathbf{b}) = \epsilon (\mathbf{a} \oplus \mathbf{e}(\mathbf{k}),\mathbf{b} \oplus \mathbf{e}(\mathbf{p})) - \epsilon (\mathbf{a},\mathbf{b})$. Moreover, it proves convenient to introduce an additional short hand for syndromes that are modified by an additional error as $(\mathbf{a},\mathbf{b})^{(\mathbf{k},\mathbf{p})} = (\mathbf{a} \ominus \mathbf{e}(\mathbf{k}),\mathbf{b} \ominus \mathbf{e}(\mathbf{p}))$.\\

 With this notation at hand, we can state an explicit representation of the action of the generator on the generalized Pauli basis element, so that

\begin{eqnarray}\label{eq:Liouvillian_in_Pauli_basis}
-\rho_G \mathcal{L}(\bar{\sigma}_{\mathbf{k}} \sigma_{\mathbf{p}}) &=& \sum_{(\mathbf{a},\mathbf{b})} \sum_{j,(l_j,m_j)} \rho_{\mathbf{ab}} \frac{1}{2} \Pi (\mathbf{a},\mathbf{b}) \bar{\sigma}_{\mathbf{k}} \sigma_{\mathbf{p}} \\ \nonumber
&\times&\left(H_{(\mathbf{a},\mathbf{b})^{(\mathbf{k},\mathbf{p})},(\mathbf{a},\mathbf{b})}^{(-l_j,-m_j)} e^{\frac{2\pi i}{d} (p_j l_j-k_j m_j)} \right. \\ \nonumber
&+&  H_{(\mathbf{a},\mathbf{b})^{(\mathbf{k},\mathbf{p})},(\mathbf{a},\mathbf{b})}^{(l_j,m_j)} e^{\frac{2\pi i}{d} (k_j m_j-p_j l_j)} \\
&-& \left. \frac{1}{2} \tilde{H}_{(\mathbf{a},\mathbf{b})^{(\mathbf{k},\mathbf{p})},(\mathbf{a},\mathbf{b})}^{(l_j,m_j)} \right) \nonumber .
\end{eqnarray}

Recall that we can express the projector $ \Pi (\mathbf{a},\mathbf{b}) $ in terms of a $\bZ_d$ Fourier transform over a particular subset of generalized Pauli operators. This in particular means, that we can express the action 
of $\cL$ on any generalized Pauli again as a linear combination of the same basis elements. Hence, we can read off the matrix elements in this basis directly. Since the Dirichlet matrix is essentially given by $-\rho_G \cL$, we can state it directly in the basis $ \{| \bar{\sigma}_{\mathbf{k}} \sigma_{\mathbf{p}} ) \}$ and obtain:

\begin{eqnarray}
\hat{\mathcal{E}} &=& \frac{1}{d^{2N}} \sum_{j,(l_j,m_j)} \sum_{(\mathbf{a},\mathbf{b})} \sum_{(\mathbf{k},\mathbf{p})} \sum_{(\mathbf{x},\mathbf{y})} e^{\frac{2\pi i}{d}(<\mathbf{a},\mathbf{x}>+<\mathbf{b},\mathbf{y}>)} \\ \nonumber
&\times& e^{- \frac{2\pi i}{d} < \mathbf{k},\bar{\mathbf{x}}> } \left| \bar{\mathbf{y}} \oplus \mathbf{k}, \bar{\mathbf{x}} \oplus \mathbf{p} \right) \left( \mathbf{k},\mathbf{p} \right| \\ \nonumber
 &\times& \frac{1}{2} \left(\frac{1}{2} \tilde{H}_{(\mathbf{a},\mathbf{b})^{(\mathbf{k},\mathbf{p})},(\mathbf{a},\mathbf{b})}^{(l_j,m_j)} - H_{(\mathbf{a},\mathbf{b})^{(\mathbf{k},\mathbf{p})},(\mathbf{a},\mathbf{b})}^{(l_j,m_j)} \theta_{(\mathbf{k},\mathbf{p}),(l_j,m_j)} \right.\\ \nonumber
 &-& \left. H_{(\mathbf{a},\mathbf{b})^{(\mathbf{k},\mathbf{p})},(\mathbf{a},\mathbf{b})}^{(-l_j,-m_j)} \theta_{(\mathbf{k},\mathbf{p}),(-l_j,-m_j)}\right) \rho_{\mathbf{ab}} ,
\label{eq:dirichlet_paulibasis}
\end{eqnarray}

where $\theta_{(\mathbf{k},\mathbf{p}),(l_j,m_j)} = e^{\frac{2\pi i}{d} (k_j m_j-p_j l_j)}$.\\ 

\subsubsection{Variance matrix}

If we now turn to the second matrix $\hat{\cV}$, note that the variance $Var(f,f)$ can be interpreted as the Dirichlet form of a completely depolarizing semi-group on $\cM_{d^N}$. That is we can introduce the depolarizing generator $\mathcal{D}(f) = \rho_G\tr{f} - f$. So that we can write $Var(f,f) = -\tr{\rho_Gf^\dagger \mathcal{D}(f)}$. Recall that the trace can be expressed as a twirl over generalized Pauli matrices as $\tr{f} = d^{-N} \sum_{\mathbf{kp}} (\bar{\sigma}_{\mathbf{k}} \sigma_{\mathbf{p}})^\dagger f \bar{\sigma}_{\mathbf{k}} \sigma_{\mathbf{p}}$. This identity proves quite useful in the derivation of the matrix representation of $Var(f,f)$. Following the same approach, as outlined in \cite{Temme14}, we can derive the matrix representation $\hat{\cV}$ of the variance much like the Dirichlet matrix and obtain:

\begin{eqnarray}
\hat{\cV} &=& \frac{1}{d^N} \frac{1}{d^{2N}} \sum_{(\nu,\kappa)} \sum_{(\mathbf{a},\mathbf{b})} \sum_{(\mathbf{k},\mathbf{p})} \sum_{(\mathbf{x},\mathbf{y})} e^{\frac{2\pi i}{d}(<\mathbf{a},\mathbf{x}>+<\mathbf{b},\mathbf{y}>)} \nonumber \\
&\times& \left( \rho_{\mathbf{ab}} \rho_{(\mathbf{a},\mathbf{b})^{(-\nu,-\kappa)}} - \rho_{\mathbf{ab}} \rho_{(\mathbf{a},\mathbf{b})^{(-\nu,-\kappa)}} \theta_{(\mathbf{k},\mathbf{p}),(\nu,\kappa)} \right) \nonumber \\
&\times& \left| \bar{\mathbf{y}} \oplus \mathbf{k}, \bar{\mathbf{x}} \oplus \mathbf{p} \right) \left( \mathbf{k},\mathbf{p} \right|.
\end{eqnarray}

\subsubsection{Dirichlet and variance matrices in the dual basis}

Note that the Dirichlet matrix and the Variance matrix are formally very similar. A central difference however is that the sum over $(\nu,\kappa)$  in the definition of $\hat{\cV}$  is taken over the full matrix basis $\bar{\sigma}_{\nu} \sigma_{\kappa} = Z_1^{\nu_1} Z_2^{\nu_2} \cdots Z_{2N}^{\nu_{2N}} X_1^{\kappa_1} X_2^{\kappa_2} \cdots X_{2N}^{\kappa_{2N}}$. This is considerably different from the sum over $(j,(l_j,m_j))$ in the Dirichlet matrix $\hat{\cE}$. This sum is constrained to run only over all local operators acting only on a single site. Hence $\hat{\cV}$ contains considerably more summands than $\hat{\cE}$. It is now the central challenge to show that despite this larger number of summands the span of $\hat{\cV}$  lies well within the span of $\hat{\cE}$ and the matrix can be supported with a small $\tau$. The structural similarity becomes even more evident, when we perform a convenient basis transformation. We consider the dual basis of the commuting subgroup generated by the projectors in the quantum double Hamiltonian:

\bq\nonumber
\left| (\mathbf{a},\mathbf{b})_{(\mathbf{k}_0,\mathbf{p}_0)} \right) &=& \frac{1}{d^{N}} \sum_{(\mathbf{x},\mathbf{y})} e^{\frac{2\pi i}{d} (<\mathbf{a},\mathbf{x}>+<\mathbf{b},\mathbf{y}>)} e^{-\frac{2\pi i}{d} <\mathbf{k}_0,\bar{\mathbf{x}}> } \\
&\times& \left| \bar{\mathbf{y}} \oplus \mathbf{k}_0, \bar{\mathbf{x}} \oplus \mathbf{p}_0 \right).
\eq

Note that every dual vector that starts from some particular reference state labeled by $(\mathbf{k},\mathbf{p})$, is orthogonal to all other dual states which is not contained within the left action of the commuting generator group of the Hamiltonian. That is, every dual space  spanned $\left| (\mathbf{a},\mathbf{b})_{(\mathbf{k}_0,\mathbf{p}_0)} \right)$ is orthogonal to the one spanned by $\left| (\mathbf{a},\mathbf{b})_{(\mathbf{k}'_0,\mathbf{p}'_0)} \right)$ if we can't find a $\bar{\sigma}_{\bar{\mathbf{y}}}\sigma_{\bar{\mathbf{x}}}$ so that $\bar{\sigma}_{\mathbf{k}_0'} \sigma_{\mathbf{p}_0'} \propto \bar{\sigma}_{\bar{\mathbf{y}}}\sigma_{\bar{\mathbf{x}}} \bar{\sigma}_{\mathbf{k}_0} \sigma_{\mathbf{p}_0}$. Hence we have a natural decomposition of the matrix alegbra into dual basis sets. Now, we furthermore introduce the states

\bq
\left| - _{\mathbf{ab}_{(\mathbf{kp})_0}}^{(\nu,\kappa)} \right) &=& \frac{1}{\sqrt{2}} \left( \left| (\mathbf{a},\mathbf{b})_{(\mathbf{k}_0,\mathbf{p}_0)} \right) \right.\\\nonumber
&-& \left. \theta_{(\mathbf{k}_0,\mathbf{p}_0),(-\nu,-\kappa)} \left| (\mathbf{a},\mathbf{b})_{(\mathbf{k}_0,\mathbf{p}_0)}^{(-\nu,-\kappa)} \right) \right),
\eq

where $\theta_{(\mathbf{k}_0,\mathbf{p}_0),(\nu,\kappa)} = e^{\frac{2\pi i}{d} (<\mathbf{k}_0,\mathbf{\kappa}>-<\mathbf{p}_0,\mathbf{\nu}>)}$, and recall the short hand notation $(\mathbf{a}\oplus \mathbf{e}(\nu),\mathbf{b}\oplus \mathbf{e}(\kappa)) = (\mathbf{a},\mathbf{b})^{(-\nu,-\kappa)}$. \\

With these vectors at hand, we can write the variance matrix as the direct sum over the orthogonal sets of the dual basis vectors as $\hat{\cV} = \bigoplus_{(\mathbf{kp})_0} \hat{\mathcal{V}}_{(\mathbf{kp})_0}$. Where every summand is positively weighted sum of projectors on to the $\left| - _{\mathbf{ab}_{(\mathbf{kp})_0}}^{(\nu,\kappa)} \right) $  so that  

\begin{equation}
\hat{\mathcal{V}}_{(\mathbf{kp})_0} = \frac{1}{d^N} \sum_{(\nu,\kappa)} \sum_{(\mathbf{a},\mathbf{b})} \rho_{\mathbf{ab}} \rho_{(\mathbf{a},\mathbf{b})^{(-\nu,-\kappa)}} \left| - _{\mathbf{ab}_{(\mathbf{kp})_0}}^{(\nu,\kappa)} \right) \left( - _{\mathbf{ab}_{(\mathbf{kp})_0}}^{(\nu,\kappa)} \right|.
\end{equation} 

If we transform the Dirichlet matrix into the same dual basis, we observe the same block diagonal structure over $\hat{\cE} = \bigoplus_{(\mathbf{kp})_0} \hat{\mathcal{\cE}}_{(\mathbf{kp})_0}$. One central difference to the variance is that the resulting matrices $\hat{\mathcal{\cE}}_{(\mathbf{kp})_0}$ can not always be expressed as a sum of projectors. The resulting matrices $\hat{\cE}_{(\mathbf{kp})_0}$ have more weight on the diagonal. However, we can find other matrices that lower bound them in a semi-definite sense so that  $\hat{\mathcal{E}}_{(\mathbf{kp})_0} \geq \hat{\mathcal{E}}'_{(\mathbf{kp})_0}$, where $\hat{\mathcal{E}}'_{(\mathbf{kp})_0}$ is a sum of projectors. Note that when employing this bound we only worsen our estimate of $\tau$. The lower bound $\hat{\cE}'_{(\mathbf{kp})_0}$ is now of the desired form so that we can write after a similar calculation:

\begin{eqnarray}\label{eq:dirichlet_blockdiagonal}
\hat{\cE}'_{(\mathbf{kp})_0} &=& \frac{1}{4} \sum_{j,(l_j,m_j)} \sum_{(\mathbf{a},\mathbf{b})} \rho_{\mathbf{ab}} \gamma(\omega^{(l_j,m_j)} (\mathbf{a},\mathbf{b}))\\\nonumber
&\times& \left| - _{\mathbf{ab}_{(\mathbf{kp})_0}}^{(l_j,m_j)} \right) \left( - _{\mathbf{ab}_{(\mathbf{kp})_0}}^{(l_j,m_j)} \right| .
\end{eqnarray}

\subsection{Bounds on the gap from a comparison theorem and canonical paths in the matrix algebra}

We have constructed the pair $(\hat{\cE}',\hat{\cV})$ in a suitable basis. It is our goal to find a sufficiently small constant $\tau$ so that the positive semi-definite matrix inequality $\tau \hat{\cE}' - \hat{\cV} \geq 0$ holds. Since both matrices are jointly block diagonal, we can compare them block-by-block, i.e., find $\tau_{(\mathbf{kp})_0}$ for every $(\mathbf{k}_0,\mathbf{p}_0)$ so that $\tau_{(\mathbf{kp})_0} \hat{\cE}'_{(\mathbf{kp})_0} - \hat{\cV}_{(\mathbf{kp})_0} \geq 0$ and simply choose $\tau$ to be the largest $\tau_{(\mathbf{kp})_0}$. This problem can be solved using a framework which is called support theory  \cite{BH03, CG05}. This framework was used in Ref. \cite{Temme14}, to derive an upper bound  on $\tau$ for a matrix pair, which is very similar to the one presented here. The fact that we can generalize theorem 11 in \cite{Temme14} to quantum doubles is a consequence of the following observation: \\

We have pointed out earlier that $\hat{\cE}'_{(\mathbf{kp})_0}$ and  $\hat{\cV}_{(\mathbf{kp})_0}$ are structurally very similar in that both matrices are positively weighted sums of rank one projectors $| - _{\mathbf{ab}_{(\mathbf{kp})_0}}^{(\nu,\kappa)} ) ( - _{\mathbf{ab}_{(\mathbf{kp})_0}}^{(\nu,\kappa)} |$. The difference, however, lies in the fact that for $\hat{\cE}'_{(\mathbf{kp})_0}$ we only sum over projectors that stem from single site Pauli operators labeled by $(l_j,m_j)$ for $j = 1\ldots N$, whereas in $\hat{\cV}_{(\mathbf{kp})_0}$ we sum over projectors that come from the full generalized Pauli algebra. The sum in $\hat{\cV}_{(\mathbf{kp})_0}$ is therefore significantly bigger. However, the algebra that can be constructed in both cases is the same. We can construct  every generalized Pauli $\bar{\sigma}_\nu \sigma_\kappa$ from the the product of single site generalized Pauli so that $\bar{\sigma}_{\nu} \sigma_{\kappa} = \bar{\sigma}_{l^1}\ldots\bar{\sigma}_{l^{|\nu|}} \sigma_{m^1}\ldots\sigma_{m^{|\kappa|}}$.  A local error path for a generalized Pauli $\bar{\sigma}_{\nu} \sigma_{\kappa}$ is a sequence of generalized Pauli operators starting from the identity $(\nu^0,\kappa^0) = 0 $ with $((\nu^0,\kappa^0),(\nu^1,\kappa^1),\ldots,(\nu^t,\kappa^t),\dots,(\nu,\kappa))$ and terminating in $(\nu,\kappa)$, so that any
subsequent configurations along the path $(\nu^t,\kappa^t)$ and $(\nu^{t+1},\kappa^{t+1})$ only differ by a single site generalized Pauli operator (see \ref{def:local_err_path}). With such a decomposition of generalized Pauli operators at hand, observe that any vector $| - _{\mathbf{ab}_{(\mathbf{kp})_0}}^{(\nu,\kappa)} )$ can be decomposed in single site vectors $\left| - _{\mathbf{ab}_{(\mathbf{kp})_0}}^{(l_j,m_j)} \right) $ as 
\be\label{eq:vec_decomp}
| - _{\mathbf{ab}_{(\mathbf{kp})_0}}^{(\nu,\kappa)} ) = \sum_{t=0}^{t_{max} -1}  \theta_{(\mathbf{k}_0,\mathbf{p}_0),(-\nu^{t+1},-\kappa^{t+1})} | - _{(\mathbf{a}, \mathbf{b})^{(-\nu^t, -\kappa^t)}_{(\mathbf{kp})_0}}^{(l^{t+1},m^{t+1})} ),
\ee
where the labels $(\nu^{t+1},\kappa^{t+1})$ and $(\nu^{t},\kappa^{t})$ differ by the single site labels $(l^{t+1},m^{t+1})$. This decomposition lies at the center of the comparison theorem 11 in Ref.~ \cite{Temme14}. \\

In order to state the result of this comparison theorem, we need to define quantum canonical paths. Observe that the decomposition in Eqn. (\ref{eq:vec_decomp}) not only depends on the partially constructed Pauli, but also on the syndromes, or the excitations the path starts from initially. To obtain a valid decomposition we need to keep track of the excitations as well. We therefore define a {\it quantum canonical path} to consist of a series of labels 
\begin{eqnarray}
\hat{\eta}_{(\mathbf{a},\mathbf{b})} = [\{(\mathbf{a},\mathbf{b}),0\}, \{(\mathbf{a},\mathbf{b})^{(-\nu^1,-\kappa^1)},(\nu^1,\kappa^1)\}, \\
\nonumber
\ldots \{(\mathbf{a},\mathbf{b})^{-\boldsymbol{\eta}},\boldsymbol{\eta}\}],
\end{eqnarray}
where the first of the labels  $(\mathbf{a},\mathbf{b})$ correspond to syndromes (excitations) and the second label $\boldsymbol{\eta}=(\nu,\kappa)$ corresponds to a partially constructed generalized Pauli operator. While edges correspond to single qudit errors present in the Dirichlet form, the whole path corresponds to a general error appearing in the variance. That is, at each link $\hat{\xi} = [ \{(\mathbf{a},\mathbf{b})^{(-\nu^k,-\kappa^k)},(\nu^k,\kappa^k)\}, \{(\mathbf{a},\mathbf{b})^{-(\nu^k \oplus l^{k+1},\kappa^k\oplus m^{k+1})},(\nu^k \oplus l^{k+1},\kappa^k\oplus m^{k+1})\} ]$ two subsequent Pauli operators differ only by a single site operator. Assume now we choose for every syndrome (set of excitations) $(\mathbf{a},\mathbf{b})$ and every generalized Pauli $\boldsymbol{\eta}$ a canonical path $\hat{\eta}_{(\mathbf{a},\mathbf{b})}$. Even though the quality of the bound strongly depends on the particular choice of this decomposition, we acquire valid bounds for any choice of $\hat{\eta}_{(\mathbf{a},\mathbf{b})}$.
We have now all components in place to follow the proof of theorem 11 in Ref. \cite{Temme14} to obtain the upper bound on $\tau$ as:

\be\label{canBound}
\tau \leq  \max_{\hat{\xi}} \frac{4 \mu(N)}{d^N \rho_{\mathbf{ab}} \gamma(\omega^{(l,m)} (\mathbf{a},\mathbf{b})) } \sum_{\hat{\eta}_{(\mathbf{a}',\mathbf{b}')} \ni  \hat{\xi} } \rho_{\mathbf{a}'\mathbf{b}'} \rho_{{(\mathbf{a}',\mathbf{b}')}^{-\boldsymbol{\eta}}}.
\ee

The maximum is taken over all possible edges $\hat{\xi}$. The sum is taken over all canonical paths $\hat{\eta}_{(\mathbf{a}',\mathbf{b}')}$ that traverse the edge $\hat{\xi}$. That is we sum over syndromes $(\mathbf{a}',\mathbf{b}')$ and errors $\eta$ that contain the edge $\hat{\xi}$ in their canonical path $\hat{\eta}_{(\mathbf{a}',\mathbf{b}')}$. Moreover observe that the bound also depends on the length $\mu(N)$ of the largest canonical path which has been analyzed in section \ref{sec:canPathLength}. We pause to observe that this bound is very similar to the canonical paths bound for graph laplacians as given derived in \cite{SJ89,DS91,Fill91}. However this bound has been obtained for a full quantum mechanical semi-group and the paths are constructed from the multiplication rules of a matrix algebra.\\

\subsection{Evaluation of the bound and the generalized energy barrier}
	
The similarity of this bound to the classical canonical paths gives rise to a convenient way of evaluating the upper bound in Eqn. (\ref{canBound}). We use the approach introduced in Ref.~\cite{SJ89}. To evaluate the bound we need to introduce a map $\Phi_\xi $ that maps any $\hat{\eta}_{(\mathbf{a},\mathbf{b})}$ that makes use of the link $\hat{\xi} = [\{(\mathbf{a},\mathbf{b})^{-\boldsymbol{\xi}}, \boldsymbol{\xi}\},\{(\mathbf{a},\mathbf{b})^{-(\boldsymbol{\xi} \oplus (l,m))}, {\boldsymbol{\xi} \oplus (l,m)}\}]$ to a corresponding Pauli. We define this map through 
\be
\Phi_\xi (\hat{\eta}_{(\mathbf{a},\mathbf{b})}) = \boldsymbol{\eta} \ominus \boldsymbol{\xi}.
\ee
Note that this map from the set of paths into the set of generalized Paulis is injective. This means that given the edge $\hat{\xi}$ and the image $\Phi_\xi (\hat{\eta}_{(\mathbf{a},\mathbf{b})})$ we can trivially recover the path through $\boldsymbol{\eta} = \Phi_\xi (\hat{\eta}_{(\mathbf{a},\mathbf{b})}) \oplus \boldsymbol{\xi}$ and the error syndrome $(\mathbf{a},\mathbf{b})$, since this pair uniquely identifies a path. We can now apply the argument of Ref.~\cite{SJ89}, and try to find a constant $\overline{\epsilon}$ so that for all edges $\hat{\xi}$ and all paths $\hat{\eta}_{(\mathbf{a},\mathbf{b})}$ the following inequality holds

\begin{eqnarray}
\frac{\rho_{\mathbf{a}'\mathbf{b}'} \rho_{{(\mathbf{a}',\mathbf{b}')}^{-\boldsymbol{\eta}}}} {\gamma(\omega^{(l,m)} (\mathbf{a},\mathbf{b})) \rho_{\mathbf{ab}}}
\leq \frac{\exp \left( 2 \beta \bar{\epsilon} \right)}{\gamma^* } \rho_{{(\mathbf{a}',\mathbf{b}')}^{\Phi_\xi (\hat{\eta}_{(\mathbf{a}',\mathbf{b}')})}}.
\end{eqnarray}

We have denoted $\gamma^*$ to correspond to the smallest value of $\gamma(\omega^{(l,m)} (\mathbf{a},\mathbf{b}))$ over all permitted transitions. Such a constant can be found and we define 
\be
\overline{\epsilon} = \max_{\hat{\eta}_{(\mathbf{a}',\mathbf{b}')}} \overline{\epsilon} (\hat{\eta}_{(\mathbf{a}',\mathbf{b}')}),
\ee 

where for every error we have that: 

\begin{eqnarray}\nonumber
\overline{\epsilon} (\hat{\eta}_{(\mathbf{a}',\mathbf{b}')}) &=& \max_{\hat{\xi} \in \hat{\eta}_{(\mathbf{a}',\mathbf{b}')}} \left( \epsilon({(\mathbf{a}',\mathbf{b}')}^{-\boldsymbol{\xi}}) + \epsilon({(\mathbf{a}',\mathbf{b}')}^{\boldsymbol{\xi} \ominus \boldsymbol{\eta}})\right. \\
&&- \left.\epsilon((\mathbf{a}',\mathbf{b}')) - \epsilon({(\mathbf{a}',\mathbf{b}')}^{-\boldsymbol{\eta}}) \right).
\label{eq:enbarrier_masses}
\end{eqnarray}

In fact, the constant $\overline{\epsilon}$ was chosen directly so that the inequality above is satisfied for all paths and all edges. This inequlity can be now used to estimate an upper bound to  Eqn.~(\ref{canBound}),  and we find that for all $\hat{\xi}$  

\begin{eqnarray}
\tau &\leq& \frac{4 \mu(N)}{\gamma^*} \exp \left( 2 \beta \bar{\epsilon} \right) \max_{\hat{\xi}} \sum_{\hat{\eta}_{(\mathbf{a}'\mathbf{b}')} \ni \hat{\xi}} \frac{\rho_{{(\mathbf{a}',\mathbf{b}')}^{-\Phi_\xi (\hat{\eta}_{(\mathbf{a}'\mathbf{b}')})}}}{d^N} .
\end{eqnarray}

Since, the map  $\Phi_\xi$ is injective for every edge $\hat{\xi}$, we can only reach a subset of all generalized Pauli operators. Hence, we may bound the sum over this subset by summing over every generalized Pauli, so  that we may bound 
\begin{equation}
\sum_{\hat{\eta}_{(\mathbf{a}'\mathbf{b}')} \ni \hat{\xi}} \frac{1}{d^N} \rho_{{(\mathbf{a}',\mathbf{b}')}^{-\Phi_\xi (\hat{\eta}_{(\mathbf{a}'\mathbf{b}')})}} \leq \sum_{\textrm{every}  \eta} \frac{1}{d^N} \rho_{{(\mathbf{a}',\mathbf{b}')}^{-\boldsymbol{\eta}}} = 1.
\end{equation}

The last equality follows from the representation of the trace, as presented in step 2. of this section and an argument taken from \cite{Temme14}, section IV. Since this bound is independent of the choice of edge $\hat{\xi}$ we obtain the convenient bound on $\tau$ that only depends on the generalized energy barrier and the length of the longest path, 

\begin{equation}
\tau \leq \frac{4 \mu(N)}{\gamma^*} \exp \left( 2 \beta \overline{\epsilon} \right) .
\end{equation}
	
On first sight, this bound looks identical to the bound that was obtained for $\bZ_2$ - stabilizers. However, the generalized energy barrier is quite different. It does reduce to the one defined in \cite{Temme14}, when we set $d=2$, but the advance is that it now holds for all possible Abelian quantum double models. If we subsitute the energies $\epsilon({(\mathbf{a},\mathbf{b})})$ as defined in Eq.~(\ref{eq:HamEnergies}), we obtain  

\begin{eqnarray}
&&\overline{\epsilon} (\hat{\eta}_{(\mathbf{a}',\mathbf{b}')}) =  \\
&&\max_{\hat{\xi} \in \hat{\eta}_{(\mathbf{a}',\mathbf{b}')}} \left(\sum_v \left( J_v^{{\mathbf{a}'}^{-\xi_{el}}} + J_v^{{\mathbf{a}'}^{\xi_{el} \ominus \eta_{el}}} - J_v^{\mathbf{a}'} - J_v^{{\mathbf{a}'}^{-\eta_{el}}} \right) \right.  \nonumber \\
&&\left. \sum_p \left( J_p^{{\mathbf{b}'}^{-\xi_f}} + J_p^{{\mathbf{b}'}^{\xi_f \ominus \eta_f}} - J_p^{\mathbf{b}'} - J_p^{{\mathbf{b}'}^{-\eta_f}} \right) \right) \nonumber.
\end{eqnarray}

We can write $(\mathbf{a}',\mathbf{b}') = (\mathbf{a},\mathbf{b}) \ominus e(\boldsymbol{\xi})$ and similarly $\mathbf{a}' = \mathbf{a} \ominus e(\xi_{el})$, $\mathbf{b}' = \mathbf{b} \ominus e(\xi_f)$, where $\xi_{el}$, $\eta_{el}$, $\xi_f$, $\eta_f$ are the electric/magnetic part of the errors: $\boldsymbol{\xi} = (\xi_{el},\xi_f)$ and $\boldsymbol{\eta}=(\eta_{el},\eta_f)$. $\xi_{el}$ ($\eta_{el}$) is understood as $\bar{\sigma}_{\xi_{el}}$ ($\bar{\sigma}_{\eta_{el}}$) error applied to a state, while $\xi_f$ ($\eta_f$) stands for applying the $\sigma_{\xi_f}$ ($\sigma_{\eta_f}$) error. This is a direct consequence of the charge flux duality. We observe, that the charge / flux contributions behave formally identical and the contribution to the energy barrier can be seen as the sum of both these contributions, i.e. $\overline{\epsilon} (\hat{\eta}_{(\mathbf{a}',\mathbf{b}')}) = \overline{\epsilon}^{\mathbf{a}} (\hat{\eta}_{(\mathbf{a}',\mathbf{b}')}) + \overline{\epsilon}^{\mathbf{b}}(\hat{\eta}_{(\mathbf{a}',\mathbf{b}')})$. It therefore suffices to only discuss one sector, i.e. either the chargeon of the fluxon part of the model from now on and we write:

\begin{eqnarray}\label{eq:charge}
\overline{\epsilon}^{\mathbf{a}} (\hat{\eta}_{(\mathbf{a}',\mathbf{b}')}) = \max_{\hat{\xi} \in \hat{\eta}_{(\mathbf{a}',\mathbf{b}')}} \sum_v \left( J_v^{{\mathbf{a}'}^{-\xi_{el}}} + J_v^{{\mathbf{a}'}^{\xi_{el} \ominus \eta_{el}}} \right.\\\nonumber
\left.- J_v^{\mathbf{a}'} - J_v^{{\mathbf{a}'}^{-\eta_{el}}} \right) .
\end{eqnarray}

We can evaluate this barrier  as follows:  We can write $J_v^\alpha = \sum_z J_v^z \delta_{z,\alpha}$ for convenience. To this end we can express Eq.~(\ref{eq:charge}) as 

\begin{eqnarray}\label{eq:energy_barrier_z_dependent}
\overline{\epsilon}^{\mathbf{a}} (\hat{\eta}_{\mathbf{ab}}) = \max_{\xi} \sum_{v \; : \; z=0}^{d-1} J_v^z \left( \delta_{z,a_v\oplus e_v(\xi)}  -\delta_{z,a_v} \right. \\\nonumber \left.  + \delta_{z,a_v\oplus e_v(\eta) \ominus e_v(\xi)} - \delta_{z,a_v\oplus e_v(\eta)} \right) .
\end{eqnarray}
Even though this expression (and the following expressions) directly only incorporates the electric sector, i.e. $\xi = \xi_{el}$ and $\eta= \eta_{el}$ above, it is still the complete energy barrier. Due to the charge-flux duality, we can construct the electric and magnetic errors one after the other, therefore at any time we need only to look at one of the sectors.

In order to evaluate this barrier, we need to consider several different scenarios in order to express this equation more conveniently. The different cases correspond to different values of $e_v(\xi)$ and $e_v(\eta)$ and are summarized in the table. The value inside the parentheses of course depends on the relative value of $a_v$ and $z$. Our goal is to get a bound that holds for all possible $a_v$ starting configurations, thus we maximize expression (\ref{eq:energy_barrier_z_dependent}) as a function of $a_v$. In order to achieve that maximum, we have chosen the relative value of $a_v$ and $z$ such that it gives the highest possible value in each case.

\begin{table}[htbp]
\begin{center}
\begin{tabular}{ c | c | c | c }
$e_v(\xi)$ & $e_v(\eta)$ & Sum of $\delta$'s & $z$ \\
\hline
0 & 0 & 0 & any \\
$p$ & $p$ & 0 & any \\
0 & $p$ & 0 & any \\
$p$ & 0 & 1 & $a_v \oplus e_v(\xi)$ or $a_v \ominus e_v(\xi)$ \\
$p$ & $q$ & 1 & $a_v \oplus e_v(\xi)$ or $a_v \ominus e_v(\xi) \oplus e_v(\eta)$
\end{tabular}
\end{center}
\caption{Here $p\neq q$, $p\neq 0$, $q\neq 0$.}
\end{table}
The case-to-case scenario shown in the table can be summarized in one simple formula, and we can rewrite the sum of the Kronecker symbols as the more convenient expression:
\begin{equation}
\overline{\delta_{e_v(\xi),0}} \cdot \overline{\delta_{e_v(\xi),e_v(\eta)}} \left( \delta_{z,a_v \oplus e_v(\xi)} + \delta_{z,a_v \ominus e_v(\xi) \oplus e_v(\eta)} \right) ,
\end{equation}
where $\overline{\delta_{x,y}} = 1-\delta_{x,y} = \{ 0 \textrm{\; if $x=y$}; 1 \textrm{\; if $x\neq y$} \}$.

Using this formula considering that for a canonical path $\hat{\eta}_a$ we can consider the edge $\xi = {\eta}^t$ as the partially constructed Pauli operator at some step $t$. To this 
end we need to consider for every $\hat{\eta}_a$ the largest contribution along the path and therefore have to maximise this value for the largest value $t$. With this substitution, the contributions $\overline{\epsilon}^{\mathbf{a}} (\hat{\eta})$ looks like:
\begin{eqnarray} \label{eq:mstar}
\overline{\epsilon}(\hat{\eta}) = \max_{t} \left( \max_{z,v'} J_{v'}^z \right) \sum_{v} \overline{\delta_{e_v(\bar{\eta}^t),0}} \cdot \overline{\delta_{e_v(\bar{\eta}^t),e_v(\eta)}} ,
\end{eqnarray}
i.e. the energy barrier is the maximum energy cost the environment has to provide to the system during any canonical path which constructs the error configuration $\eta$. However, since this energy barrier upper bounds the mixing time, in order to get a better upper bound we may choose the canonical path wisely, i.e. so it gives a smaller energy barrier.  The reader will observe, that this energy barrier corresponds exactly to the one that was analyzed in detail in section \ref{sec:energy_barrier} of this paper. We point to a notable difference to the analysis of only $\mathbb{Z}_2$ models. In these models only those sites contribut where the charge disappeared to vacuum, while for the $d>2$ general cases a site contributes even if the anyon doesn't completely disappear at the end of the canonical path, but it changes to an anyon characterized by a different syndrome, be it the vacuum or anything else.

This energy barrier gives a valid bound on the mixing time for all choices of canonical paths we can make, but the quality of the bound depends on the choice of the canonical path. To this end, when evaluating the bound, we follow the decomposition into single site Paulis as it was amply discussed in section \ref{subsec:en_barrier_construction}.

\section{Conclusions}
\label{sec:Concl} 

We have established a strict Arrhenius law upper bound for the memory time of all quantum doubles based on an Abelian group and gave a mathematically proper definition for the energy barrier. We have also seen that the energy barrier is a constant for these models. We may apply our results to the model introduced in Ref.~\cite{BAP14} to evaluate whether entropic protection is possible for such models.

Even though our bound on the mixing time is quite general in the sense that it applies to any Abelian quantum double, there are a variety of models to which our analysis doesn't apply. For these models the possibility of entropy protection is not yet excluded. One of the possible directions one can go is to invent different kind of defects, other than the type referred to here as "consistent" (defects that allow a consistent labeling of anyons based on the region they stay at) and "permuting"-type (which only apply a permutation to any particle crossing a defect line). 
We investigated Hamiltonians with consistent defect lines. However, interesting constructions use non-consistent defect lines to introduce topological defects, such as the construction of Bombin for the toric code with twists \cite{Bombin10}.  Another possibility we can't exclude is to assume a different thermal environment, and thus use a different noise model for this analyis. This might result in the simple permuting-type defect lines introducing entropic protection to Abelian systems. Our analysis only applies to Abelian quantum doubles, therefore, entropic protection of quantum doubles based on non-Abelian groups (or of models that are not quantum doubles of any group) is not ruled out, especially since one can think of a variety of defect lines which can arise in such models~\cite{BSW11}.


\section{Acknowledgments}

We thank Benjamin Brown for helpful discussions and for valuable comments on an early version of the draft. We acknowledge funding provided by the Institute for Quantum Information and Matter, an NSF Physics Frontiers Center (NFS Grants PHY-1125565 and PHY-0803371) with support of the Gordon and Betty Moore Foundation (GBMF-12500028). OLC is partially supported by Fonds de Recherche Quebec-Nature et Technologies and the Natural Sciences and Engineering Research Council of Canada (NSERC).


\appendix 

\section{Maximum cardinality and sum of multisets without a zero-sum subset} \label{sec:zero-sum_multiset}

In this section, we derive elementary facts about multisets of $\mathbb{Z}_d$ without any subset sum equal to zero. This is motivated by the problem of fusing anyons in a $\mathbb{Z}_d$ quantum double. Indeed, consider a set of anyons labelled by their anyonic charge. Since there can be several anyons of the same anyonic charge, we are interested in set-like mathematical objects where multiplicity is explicit. For instance, for a set of two anyons of type 1 and one anyon of type 3, we would like to write $\{1,1,3\}$. 

The formal mathematical object for this intuitive notion are called multisets. In our simple case, they are simply a multiplicity function 
\begin{equation}
f:\mathbb{Z}_d\to \mathbb{N}
\end{equation}
 where $f(k)$ is the multiplicity of $k$ in the multiset.  For instance, for $d=5$, the multiset $\{1,1,3\}$ is equivalent to $\{(1,2),(2,0),(3,1),(4,0),(5,0)\}$ which is the graph of the multiplicity function. 
 
 The cardinality of a multiset $|f|$ is the total number of elements, taking multiplicity into account, i.e., \begin{equation}
 |f|=\sum_{k\in\mathbb{Z}_d}f(k)\in\mathbb{N}.
 \end{equation} 
 The sum of a multiset $s(f)$ is the sum of all its elements, taking multiplicity into account, i.e., \begin{equation}
 s(f)=\sum_{k\in\mathbb{Z}_d}kf(k)\in\mathbb{Z}_d. 
 \end{equation}
 Note that $|f|$ is an integer whereas $s(f)$ is defined modulo $d$. Physically, the sum $s(f)$ is the anyon type resulting from fusing all anyons in the multiset. Moreover, one defines the sum $f+g$ of two multisets $f$ and $g$
 \begin{equation}
 (f+g)(k)=f(x)+g(x).
 \end{equation} 
It corresponds to the intuitive idea of adding the elements of $f$ and $g$. For instance, $\{1,1,3\}+\{1,2,4\}=\{1,1,3,1,2,4\}=\{1,1,1,2,3,4\}$. Finally, one defines a (multi)subset by 
\begin{equation}
f\subseteq g \Leftrightarrow  \forall k \in \mathbb{Z}_d \quad f(k)\leq g(k).
\end{equation}

Given a multiset of anyons, we would like to know whether it contains subsets which fuse to the vacuum. Mathematically, given a multiset $f$, we are interested in the sum of its subsets  $s(f')$ where $f'\subseteq f$. More precisely, we want to know if a subset sums to zero modulo $d$. We  define the spectrum of a multiset 
\begin{equation}
\textrm{sp}(f)=\bigcup_{f'\subseteq f, f'\neq \emptyset} s(f').
\end{equation}
and say that a multiset is \emph{zero-sum free} if $0$ is not in its spectrum, i.e. no non-empty subset sums to 0 modulo $d$. We aim to determine the largest possible cardinality and sum of a zero-sum free multiset. This is related to a well-studied problem in complexity theory and cryptography, called the subset sum problem \cite{CLR01}. 

First, we want to understand what happens to the spectrum when we add a singleton to the multiset, i.e., we consider the operation $f\to f+\{x\}$. The non-empty subsets of $f+\{x\}$ are $\{x\}$, the subsets of $f$ and the subsets of $f$ to which we add the element $x$. Thus, we have
\begin{equation} \label{eq:spectrum_change}
\textrm{sp}(f+\{x\})=\{x\}\cup\textrm{sp}(f)\cup\left(\textrm{sp}(f)+x\right).
\end{equation}

We can then prove the following lemma
\begin{lemma} \label{lemma:spectrum_increase}
The spectrum of a zero-sum free multiset strictly increases when adding any singleton, i.e.,
\begin{equation} \label{eq:spectrum_increase}
\textrm{f is zero-sum free} \Rightarrow \textrm{sp}(f) \subsetneq \textrm{sp}(f+\{x\})
\end{equation}
\end{lemma}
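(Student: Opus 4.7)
The plan is to start from the decomposition already provided in Eq.~\eqref{eq:spectrum_change}, namely $\textrm{sp}(f+\{x\})=\{x\}\cup\textrm{sp}(f)\cup(\textrm{sp}(f)+x)$, which immediately gives the inclusion $\textrm{sp}(f)\subseteq\textrm{sp}(f+\{x\})$. Thus the entire content of the lemma is to upgrade this to a strict inclusion. I will do this by exhibiting, in every case, an element of $\textrm{sp}(f+\{x\})$ that does not lie in $\textrm{sp}(f)$.

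First I would split into two cases depending on whether $x\in\textrm{sp}(f)$ or not. The case $x\notin\textrm{sp}(f)$ is immediate: the singleton $\{x\}$ is a nonempty subset of $f+\{x\}$, so $x\in\textrm{sp}(f+\{x\})\setminus\textrm{sp}(f)$. Note that this case already covers $x=0$, since zero-sum-freeness of $f$ forces $0\notin\textrm{sp}(f)$.

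The interesting case is $x\in\textrm{sp}(f)$ with $x\neq 0$. Here I would argue by contradiction, assuming that $\textrm{sp}(f)+x\subseteq\textrm{sp}(f)$ (since both $\{x\}$ and $\textrm{sp}(f)$ already lie in $\textrm{sp}(f)$, this is the only remaining way strict inclusion could fail). Because translation by $x$ is injective on the finite set $\textrm{sp}(f)$, this inclusion is actually an equality $\textrm{sp}(f)+x=\textrm{sp}(f)$, so $\textrm{sp}(f)$ is stable under iterated translation by $x$. Starting from $x\in\textrm{sp}(f)$ and iterating, we obtain $kx\in\textrm{sp}(f)$ for every positive integer $k$. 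Letting $m$ denote the additive order of $x$ in $\mathbb{Z}_d$ (which is finite and at least $2$ since $x\neq 0$), we get $mx=0\in\textrm{sp}(f)$, contradicting the assumption that $f$ is zero-sum free.

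The main obstacle is really the second case: one has to exploit the group structure of $\mathbb{Z}_d$ to convert stability of $\textrm{sp}(f)$ under translation by $x$ into the presence of $0$ in the spectrum. Once that orbit argument is in place, the lemma follows directly. No additional structural facts about multisets beyond Eq.~\eqref{eq:spectrum_change} are needed.
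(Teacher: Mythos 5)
Your proof is correct and follows essentially the same route as the paper's: both rest on Eq.~\eqref{eq:spectrum_change}, deduce from failure of strictness that $x\in\textrm{sp}(f)$ and (via finiteness of $\textrm{sp}(f)$) that $\textrm{sp}(f)+x=\textrm{sp}(f)$, and then conclude $0\in\textrm{sp}(f)$. The only cosmetic differences are that the paper argues by contrapositive rather than your case split plus contradiction, and extracts $0\in\textrm{sp}(f)$ in one backward shift ($x=s(f^\star)+x$) where you iterate the translation up to the additive order of $x$.
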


\begin{proof}
We prove the contrapositive of Eq.~\eqref{eq:spectrum_increase}, i.e. we consider a multiset $f$ for which $\textrm{sp}(f) = \textrm{sp}(f+\{x\})$ and we will prove that it contains a zero-sum subset. Using Eq.~\eqref{eq:spectrum_change}, the equality of spectra implies that i) x is an element of $\textrm{sp}(f)$ and ii) $\textrm{sp}(f) \subseteq \textrm{sp}(f)+\{x\}$. Since the addition by $x$ only shifts the spectrum the two sets have the same cardinality and $\textrm{sp}(f)= \textrm{sp}(f)+\{x\}$. In particular, since $x$ is an element of $\textrm{sp}(f)$, there exists a subset  $f^\star \subseteq f$ for which the following equality holds modulo $d$: $x=\textrm{sp}(f^\star)+x$. Thus, the sum of $f^\star$ is zero.
\end{proof}

Using lemma~\ref{lemma:spectrum_increase}, we can deduce the maximal cardinality of a zero-sum free multiset. Indeed, consider a zero-sum multiset by sequentially adding its elements to the empty set. The spectrum will increase at each addition of a singleton by at least 1. However, a spectrum is contained in $\mathbb{Z}_d$ and thus has at most $d-1$ elements. Hence, we have proven that
\begin{theorem}
\begin{equation}
\textrm{f is zero-sum free} \Rightarrow |f| \leq (d-1)
\end{equation}
\end{theorem}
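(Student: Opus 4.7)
My plan is to prove this by induction on the cardinality of $f$, exploiting Lemma~\ref{lemma:spectrum_increase} to track how the spectrum grows as we build the multiset one element at a time. The key observation is that a zero-sum free multiset has its spectrum contained in $\mathbb{Z}_d \setminus \{0\}$, a set of size $d-1$, so the spectrum itself is bounded in size.

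First, I would enumerate the elements of $f$ as $x_1, x_2, \ldots, x_n$ (with multiplicity) where $n = |f|$, and define the sequence of submultisets $f_0 = \emptyset, f_1 = \{x_1\}, \ldots, f_k = f_{k-1} + \{x_k\}, \ldots, f_n = f$. Since $f$ is zero-sum free, every submultiset $f_k$ is also zero-sum free (any zero-sum subset of $f_k$ would also be a zero-sum subset of $f$). Thus Lemma~\ref{lemma:spectrum_increase} applies at each step: $\textrm{sp}(f_k) \subsetneq \textrm{sp}(f_{k+1})$ for $0 \leq k < n$.

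Second, I would extract the counting consequence. Since each inclusion is strict and the spectra are finite subsets of $\mathbb{Z}_d$, we have $|\textrm{sp}(f_{k+1})| \geq |\textrm{sp}(f_k)| + 1$. Starting from $|\textrm{sp}(f_0)| = |\textrm{sp}(\emptyset)| = 0$ (the empty multiset has no nonempty subsets), this gives $|\textrm{sp}(f)| \geq n = |f|$.

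Finally, I would invoke the zero-sum free hypothesis to upper-bound $|\textrm{sp}(f)|$. By definition, $0 \notin \textrm{sp}(f)$, so $\textrm{sp}(f) \subseteq \mathbb{Z}_d \setminus \{0\}$, giving $|\textrm{sp}(f)| \leq d - 1$. Combining this with the lower bound $|f| \leq |\textrm{sp}(f)|$ yields $|f| \leq d-1$, as required. There is no real obstacle here: the work was done in establishing the strict growth lemma, and the present theorem is essentially a pigeonhole argument built on top of it. The only mild subtlety is being careful that each $f_k$ inherits the zero-sum free property from $f$, which is immediate from the definition of submultiset.
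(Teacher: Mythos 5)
Your proof is correct and follows essentially the same route as the paper: build the multiset one singleton at a time, apply Lemma~\ref{lemma:spectrum_increase} to get strict growth of the spectrum at each step, and bound the spectrum's size by $d-1$ since it excludes $0$. Your explicit checks (that each partial multiset inherits zero-sum freeness, and that the spectrum lies in $\mathbb{Z}_d\setminus\{0\}$) are details the paper leaves implicit, but the argument is the same.
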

In fact, we can saturate the bound. Any multiset of an integer $k$ coprime with $d$ and multiplicity $d-1$ is zero-sum free. In particular, the multiset containing $d-1$ with multiplicity $d-1$ maximizes the sum of any zero-sum free multiset.

\begin{theorem}\label{thm:max_sum}
\begin{flalign}
\max_{\textrm{zero-sum free }f} |f|=d-1 \\
\max_{\textrm{zero-sum free }f} s(f)=(d-1)^2
\end{flalign}
\end{theorem}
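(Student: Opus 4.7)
The plan is quite short since the heavy lifting was done in the preceding theorem, which already gives the cardinality upper bound $|f| \leq d-1$ for any zero-sum free multiset. To complete the first identity, I would simply exhibit a zero-sum free multiset of cardinality exactly $d-1$. The natural candidate is $f^{\star}$ defined by $f^{\star}(1) = d-1$ and $f^{\star}(k) = 0$ otherwise, i.e.\ $d-1$ copies of the element $1$. The non-empty submultisets are parametrized by the number $j \in \{1,\dots,d-1\}$ of copies of $1$ they contain, so their sums are $1, 2, \ldots, d-1$ modulo $d$, all non-zero. Hence $f^{\star}$ is zero-sum free with $|f^{\star}| = d-1$, and the first identity follows.

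For the second identity I would separate the argument into an upper bound and a saturating example. Interpreting the sum $s(f) = \sum_{k=0}^{d-1} k f(k)$ via the canonical integer representatives $\{0, 1, \ldots, d-1\}$ of $\mathbb{Z}_d$, each summand is bounded by $(d-1) f(k)$, so $s(f) \leq (d-1)\,|f|$. Combining this with the cardinality bound from the previous theorem immediately yields $s(f) \leq (d-1)^2$ whenever $f$ is zero-sum free. To saturate, I would exhibit the multiset $f^{\diamond}$ consisting of $d-1$ copies of the element $d-1$. Its non-empty submultisets have sums $j(d-1) \equiv -j \pmod{d}$ for $1 \leq j \leq d-1$, each of which is a non-zero element of $\mathbb{Z}_d$. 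Hence $f^{\diamond}$ is zero-sum free, and its integer sum is exactly $(d-1)(d-1) = (d-1)^2$.

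There is no real mathematical obstacle; the only subtlety is notational. The definition places $s(f)$ in $\mathbb{Z}_d$, which would make the claimed bound $(d-1)^2$ vacuous if the sum were reduced modulo $d$. The statement should be read as maximizing the integer-valued sum obtained using the representatives $\{0,\dots,d-1\}$, while zero-sum freeness is still tested modulo $d$. Once this convention is fixed, the theorem is a direct consequence of the previous cardinality bound together with the two explicit extremal multisets $f^{\star}$ and $f^{\diamond}$ described above.
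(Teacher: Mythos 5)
Your proposal is correct and follows essentially the same route as the paper: it invokes the cardinality bound $|f|\leq d-1$ from the preceding theorem and saturates both claims with multisets consisting of $d-1$ copies of an element coprime to $d$ (the paper uses exactly the element $d-1$ for the sum, as you do). Your explicit upper bound $s(f)\leq (d-1)|f|$ and your remark that $s(f)$ must be read via the integer representatives $\{0,\dots,d-1\}$ make precise two points the paper leaves implicit, but do not change the argument.
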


\end{document}